\renewcommand{\div}{\nabla\cdot}
\newcommand{\grad}{\nabla}
\renewcommand{\D}{\mathcal{D}}
\renewcommand{\E}{\mathcal{E}}
\renewcommand{\F}{\mathcal{F}}
\renewcommand{\H}{\mathcal{H}}
\renewcommand{\P}{\mathcal{P}}
\renewcommand{\L}{\mathcal{L}}
\renewcommand{\I}{\mathcal{I}}
\begin{document}
\title[]{Hessian transport Gradient flows}
\author[Li]{Wuchen Li}
\address{Department of Mathematics, University of California, Los Angeles.}
\email{wcli@math.ucla.edu}

\author[Ying]{Lexing Ying}
\address{Department of Mathematics, Stanford University and Facebook AI Research}
\email{lexing@stanford.edu}

\keywords{Optimal transport; Information/Hessian geometry; Hessian transport; Hessian transport
  stochastic differential equations; Generalized de Bruijn identity.}

\maketitle

\begin{abstract} 
We derive new gradient flows of divergence functions in the probability space embedded with a class of Riemannian metrics. The Riemannian metric tensor is built from the transported Hessian operator of an entropy function. The new gradient flow is a generalized Fokker-Planck equation and is associated with a stochastic differential equation that depends on the reference measure. Several examples of Hessian transport gradient flows and the associated stochastic differential equations are presented, including the ones for the reverse Kullback--Leibler divergence, $\alpha$-divergence, Hellinger distance, Pearson divergence, and Jenson--Shannon divergence.
\end{abstract}

\section{Introduction}

The de Bruijn identity plays crucial roles in information theory, probability, statistics, geometry, and machine learning
\cite{ChowLiZhou2018_entropy,CoverThomas1991_elements,CsiszarShields2004_information,Nelson2,S,ZozorBrossier2015_debruijn}. It states that the dissipation of the relative entropy, also known as the Kullback--Leibler (KL) divergence function, along the heat flow is equal to the relative Fisher information functional. This identity is important for many applications in Bayesian statistics and Markov chain Monte Carlo methods.

It turns out that there are two geometric structures in the probability space related to the de Bruijn identity. One is Wasserstein geometry (WG) \cite{Lafferty,Villani2009_optimal}, which refers to the heat flows or Gaussian kernels.  In \cite{JKO,otto2001}, it shows that the gradient flow of the negative Boltzmann Shannon entropy in WG is the heat equation. The de Bruijn identity can be understood as the rate of entropy dissipation within WG. The other one is information geometry (IG) \cite{IG,IG2}, which relates to the differential structures of the entropy. IG studies various families of Hessian geometry of entropy and divergence functions. Here, the Boltzmann-Shannon entropy, the Fisher-Rao metric, and the further induced KL divergence function are of particular importance. Besides these classical cases, one also studies generalized entropy and divergence functions, such as Tsallis
entropy and Tsallis divergence \cite{divergence,Tsallis1988}.

A natural question arises: {\em What are natural families of geometries in the probability space that connect entropy/divergence functions, heat flows, and the de Bruijn identity}?

\begin{figure}
  \centering \includegraphics[scale=0.1]{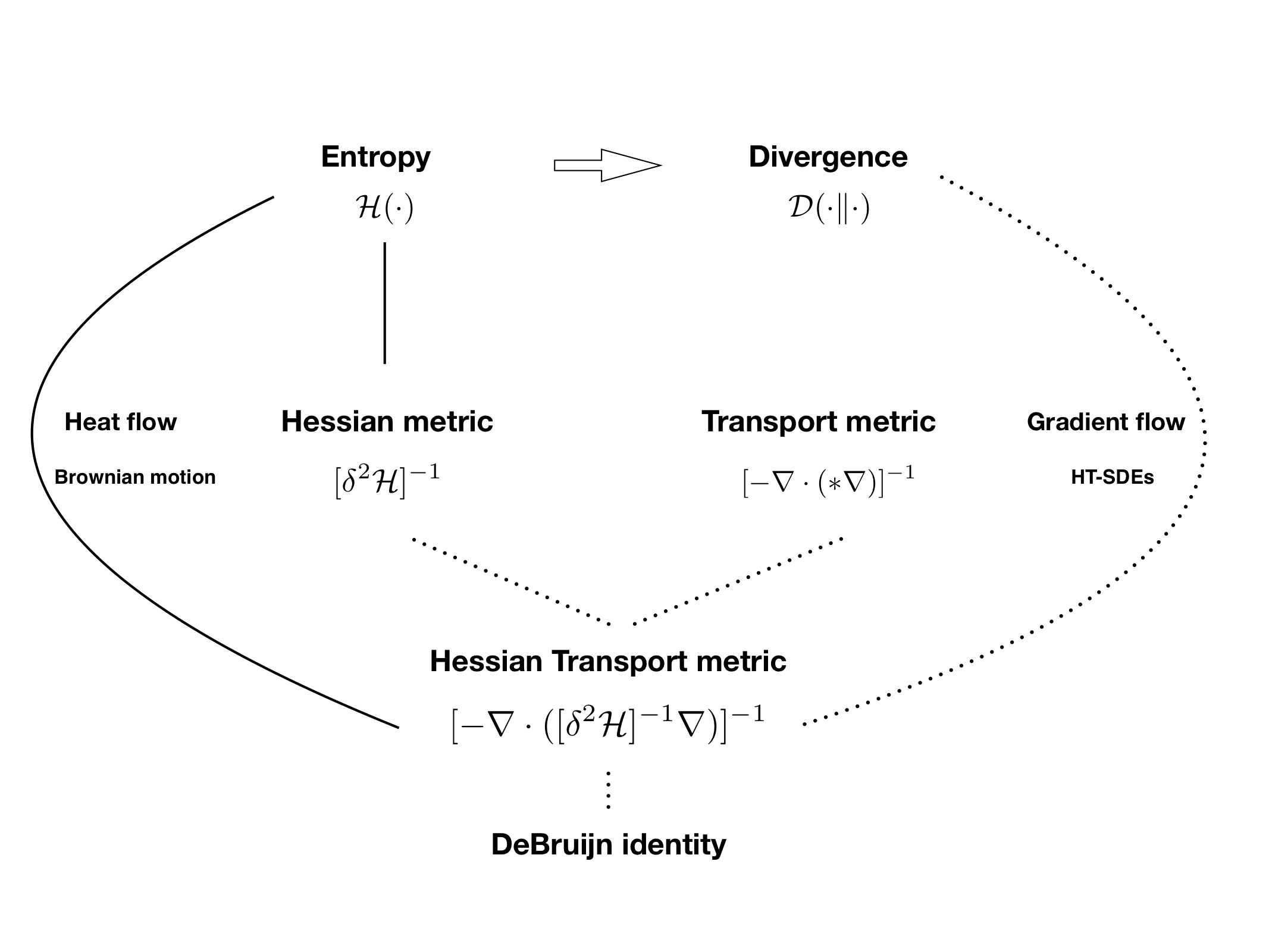}
  \caption{Derivation Diagram of the Hessian transport metric. The real lines are results in classical WG or IG communities. The dot lines are results derived in this paper. }
  \label{fig:results}
\end{figure}

In this paper, we positively answer this question by introducing a family of Riemannian metrics in
the probability space. Consider a compact space $\Omega$ and a positive smooth probability space
$\P(\Omega)$. For a strictly convex entropy function $\H\colon \P(\Omega)\rightarrow \mathbb{R}$, we
introduce a new Riemannian metric tensor $G_{\H}$ in the probability space
\begin{equation*}
  G_{\H}(\rho)^{-1}=[-\nabla\cdot([\delta^2\H(\rho)]^{-1}\nabla)],
\end{equation*}
where $\rho\in\P(\Omega)$ is a probability density function, $\delta^2\H(\rho)$ is the $L^2$ Hessian
operator of the entropy function, and finally $\nabla$ and $\nabla\cdot$ are gradient and divergence
operators on $\Omega$, respectively. We refer to Definition \ref{metric} for the formal
definition. Notice that the proposed metric involves both the Hessian geometry of $\H$ and
the transport metric (gradient and divergence operator on sample space). For this reason, it is called the {\em Hessian transport metric} (HT-metric)
(see Figure \ref{fig:results} for a schematic diagram).

As a simple but motivating example, the heat equation is the gradient flow of the entropy
function $\H$ under the HT-metric $G_\H$ induced by $\H$ itself:
\begin{equation*}
  \begin{split}
    \partial_t\rho=&-G_{\H}(\rho)^{-1}\delta \H(\rho)\\
    =&-\Big(-\nabla\cdot([\delta^2\H(\rho)]^{-1}\nabla \delta \H(\rho))\Big)\\
    =&\nabla\cdot([\delta^2\H(\rho)]^{-1}[\delta^2\H(\rho)]\nabla\rho)\\
    =&\nabla\cdot(\nabla \rho)=\Delta\rho.
  \end{split}
\end{equation*}

In a more general setting, we consider a family of entropy functions of form $\H_f(\rho)=\int_\Omega
f(\rho) dx$, where $f(\cdot)$ is convex, $f(1)=0$ and $f''$ is homogeneous of degree
$(-\gamma)$. For a fixed reference measure $\mu$, there is an associated divergence function
$\D_f(\rho\|\mu)=\int_\Omega f(\frac{\rho}{\mu})\mu dx$ for each $\H_f$. By considering the gradient
flow of $\D_f(\rho\|\mu)$ in $(\P(\Omega), G_{\H_f})$, we derive a generalized Fokker-Planck
equation
\[
\partial_t\rho(t,x)= \nabla\cdot\left(\mu(x)^{\gamma} \grad \left(\frac{\rho(t,x)}{\mu(x)}\right) \right),
\]
along with a stochastic differential equation for independent particle dynamics
\begin{equation*}
  dX_t = \gamma \mu(X_t)^{\gamma-2} \nabla \mu(X_t) dt + \sqrt{2 \mu(X_t)^{\gamma-1}} dB_t,
\end{equation*}
where $B_t$ is the standard Brownian motion. Such a SDE is called a {\em Hessian transport
  stochastic differential equation} (HT-SDE).

It is worth mentioning that two special cases of HT-metrics and their induced HT-SDEs are
particularly relevant.  When $\gamma=1$, $\H(\rho)$ is the Boltzmann-Shannon entropy, the HT-metric
is the usual Wasserstein-2 metric and the HT-SDE is the classical Langevin dynamics. When $\gamma=0$, $\H(\rho)$ is the Pearson divergence, the HT-metric is the $H^{-1}$ metric and the associated
HT-SDE is a diffusion process with zero drift. We refer to Table 1 for a summary of the results.

Currently, there are several efforts in combining both Wasserstein metric and information/Hessian
metric
\cite{LP,BauerJoshiModin2015_diffeomorphic,BauerModin2018_semiinvariant,CaoLuLu2018_exponential,WF2,Mielke,MMP,Wong}
from various perspectives. Within the Gaussian families, several extensions are studied in
\cite{Gaussian}.  Another example from the machine learning community is the Stein variational
gradient descent method \cite{Liu2017_steina,LiuWang2016_steina,LuLuNolen2019}. Here we introduce a
new geometry structure, which keeps heat flows as gradient flows of general entropy functions. We
emphasize the interaction relations between Hessian of entropy and transport metric. Our approach is
a natural extension to both IG and WG. It can also be viewed as a generalization for the field of
Wasserstein information geometry \cite{Li2018_geometrya,LiM,LiMontufar2018_ricci}.

The rest of this paper is organized as follows. In Section \ref{section2}, we define the Hessian
transport metric and show that the heat flow can be interpreted as the gradient flow of several
energy functions under appropriate HT-metrics. We then move on to derive, for general divergence
functions, the HT-metric gradient flows and the associated HT-SDEs. In Section \ref{section3}, we
introduce the Hessian transport distance (HT-distance) and derive the corresponding HT-geodesic
equation. Several numerical examples are given in Section \ref{section4}.

{
\begin{sidewaystable}
\vspace{15cm}
\centering
\scalebox{0.7}{
\begin{tabular}{|l|c|c|c|c|c|}
\hline 
Divergence & Inverse HT-Metric & Relative Fisher divergence & HT-Gradient flow & HT-SDE   & HT-geodesics\\
\hline
$\int_{\Omega}\rho\log{\frac{\rho}{\mu}}dx$ & $-\nabla\cdot(\rho\nabla)$& $\int_\Omega \left\|\nabla\log\frac{\rho}{\mu}\right\|^2\rho dx$ & $\partial_t\rho=\nabla\cdot(\rho\nabla \log \mu)+\Delta\rho$&
$dX_t=\nabla\log \mu dt+\sqrt{2}B_t$ &
$\begin{cases}
&\partial_t\rho+\nabla\cdot(\rho\nabla\Phi)=0\\
&\partial_t\Phi+\frac{1}{2}(\nabla\Phi, \nabla\Phi)=0
\end{cases}$\\
\hline
$-\int_{\Omega}\mu\log{\frac{\mu}{\rho}}dx$ & $-\nabla\cdot(\rho^2\nabla)$ &  $\int_\Omega \left\|\nabla\Big(\frac{\rho}{\mu}\Big)^{-1}\right\|^2\rho^2 dx$  & $\partial_t\rho=\div \left(\mu^2 \grad \left(\frac{\rho}{\mu}\right) \right)$&
$dX_t = 2 \grad \mu dt + \sqrt{2\mu} dB_t$&
$\begin{cases}
&\partial_t\rho+\nabla\cdot(\rho^2\nabla\Phi)=0\\
&\partial_t\Phi+(\nabla\Phi, \nabla\Phi)\rho=0
\end{cases}$\\    
\hline
$\int_\Omega \frac{4}{1-\alpha^2} (1-(\frac{\rho}{\mu})^{\frac{1+\alpha}{2}})\mu dx$
& $-\nabla\cdot(\rho^{\frac{3-\alpha}{2}}\nabla)$&$(\frac{2}{\alpha-1})^2\int_\Omega \left\|\nabla\Big(\frac{\rho}{\mu}\Big)^{\frac{\alpha-1}{2}}\right\|^2\rho^{\frac{3-\alpha}{2}} dx$ &$\partial_t\rho = \div \left(\mu^{(3-\alpha)/2} \grad \left(\frac{\rho}{\mu}\right) \right)$&
$dX_t = \frac{3-\alpha}{2} \mu^{\frac{-1-\alpha}{2}}\grad \mu dt + \sqrt{2\mu^{\frac{1-\alpha}{2}}} dB_t$&
$\begin{cases}
&\partial_t\rho+\nabla\cdot(\rho^{\frac{3-\alpha}{2}}\nabla\Phi)=0\\
&\partial_t\Phi+\frac{3-\alpha}{2}(\nabla\Phi, \nabla\Phi)\rho^{\frac{1-\alpha}{2}}=0
\end{cases}$
\\    
\hline
$\int_\Omega(\sqrt{\rho}-\sqrt{\mu})^2dx$ & $-\nabla\cdot(\rho^{\frac{3}{2}}\nabla)$ & 
$4\int_\Omega \left\|\nabla\Big(\frac{\rho}{\mu}\Big)^{-\frac{1}{2}}\right\|^2 \rho^{\frac{3}{2}} dx$&
$\partial_t\rho= \div \left(\mu^{3/2} \grad \left(\frac{\rho}{\mu}\right) \right)$&
$dX_t = \frac{3}{2} \mu^{-\frac{1}{2}} \grad \mu dt + \sqrt{2\mu^{\frac{1}{2}}} dB_t$ &
$\begin{cases}
&\partial_t\rho+\nabla\cdot(\rho^{\frac{3}{2}}\nabla\Phi)=0\\
&\partial_t\Phi+\frac{3}{2}(\nabla\Phi, \nabla\Phi)\rho^{\frac{1}{2}}=0
\end{cases}$
\\    
\hline
$\int_\Omega(\frac{\rho}{\mu}-1)^2\mu dx$ & $-\nabla\cdot(\nabla)$ & $\int_\Omega \left\|\nabla\Big(\frac{\rho}{\mu}\Big)\right\|^2dx$&
$\partial_t\rho=\nabla\cdot(\nabla(\frac{\rho}{\mu}))$&
$dX_t = \sqrt{2\mu^{-1}}dB_t$ &
$\begin{cases}
&\partial_t\rho+\Delta\Phi=0\\
&\partial_t\Phi=0
\end{cases}$\\    
\hline
$\int_\Omega \rho\log\frac{\rho}{\frac{1}{2}(\rho+\mu)}+\mu\log\frac{\mu}{\frac{1}{2}(\rho+\mu)} dx$ & $-\nabla\cdot(\rho(1+\rho)\nabla)$ &
$\int_\Omega \left\|\nabla\log(\frac{2\rho}{\rho+\mu})\right\|^2\rho(1+\rho)dx$&
$\partial_t\rho=\nabla\cdot\Big(\frac{(1+\rho)\mu^2}{\rho+\mu}\nabla(\frac{\rho}{\mu})\Big)
$&
- &
$\begin{cases}
&\partial_t\rho+\nabla\cdot(\rho(1+\rho)\nabla\Phi)=0\\
&\partial_t\Phi+\frac{1}{2}(\nabla\Phi,\nabla\Phi)(2\rho+1)=0
\end{cases}$\\    
\hline
\end{tabular}}
\label{table1}
\caption{In this table, we present the Hessian transport for KL divergence, reverse KL
  divergence, $\alpha$-divergence, Hellinger distance, Pearson divergence, and Jensen-Shannon divergence. In the case $\alpha=1$, the HT-metric recovers the Wasserstein-2 metric from the classical optimal transport theory. In the case of $\alpha=3$, the HT-metric recovers the $H^{-1}$ metric. }
\end{sidewaystable}}

\newpage
\section{Hessian transport Gradient flows}\label{section2}

In this section, we introduce the Hessian transport metrics and derive the gradient flows under these metrics.

\subsection{Motivations}\label{mot}
Consider a compact space $\Omega\subset\mathbb{R}^d$. Following the usual convention, we denote by
$\nabla$ and $\nabla\cdot$ the gradient and divergence operators in $\Omega$ and by $\delta$, $\|\cdot\|$ the Euclidean norm in $\mathbb{R}^d$, $\delta^2$ the first and the second $L^2$ variations. From now on, the boundary conditions on
$\Omega$ are given by either Neumann or periodic boundary conditions.

The heat equation
\begin{equation}\label{heat}
\partial_t\rho(t,x)=\Delta \rho(t,x),
\end{equation}
can be written in several equivalent ways as follows
\begin{equation*}
  \begin{aligned}
    \partial_t\rho(t,x)=&\nabla\cdot\Big( \nabla \rho(t,x)\Big)\\
    \partial_t\rho(t,x)=&\nabla\cdot\Big(\rho(t,x) \nabla \log\rho(t,x)\Big)\\
    \partial_t\rho(t,x)=&\nabla\cdot\left(\rho(t,x)^2\nabla \left(-\frac{1}{\rho(t,x)}\right) \right),
  \end{aligned}
\end{equation*}
where the following relation is used
\[
\nabla\rho=\rho\nabla\log\rho=\rho^2\nabla\left(-\frac{1}{\rho}\right).
\]

These formulas show that the heat flow has multiple gradient descent flow interpretations. Recall that a general gradient flow takes the form
\begin{equation*}
  \partial_t\rho=-G(\rho)^{-1}\delta\E(\rho),
\end{equation*}
where $\E(\cdot)$ is an energy function and the operator $G(\rho)\colon
C^{\infty}(\Omega)\rightarrow C^{\infty}(\Omega)$ represents the metric tensor. Under this framework, the heat equation can be interpreted in several ways:
\begin{itemize}
\item[(i)] Dirichlet energy formulation:
  \begin{equation*}
    \E(\rho)=\int_\Omega \|\nabla\rho(x)\|^2 dx,\quad\delta\E(\rho)=-\Delta \rho,\quad
    G(\rho)^{-1}=\mathbb{I},
  \end{equation*}
  where $\mathbb{I}\colon C^{\infty}(\Omega)\rightarrow C^{\infty}(\Omega)$ is the identity operator. Then 
  \begin{equation*}
    \partial_t\rho=-G(\rho)^{-1}\delta\E(\rho)=-(-\Delta\rho)=\Delta\rho.
  \end{equation*}
\item[(ii)] Boltzmann-Shannon entropy formulation:
  \begin{equation*}
    \E(\rho)=\int_\Omega \rho(x)\log\rho(x) dx, \quad \delta\E(\rho)=\log\rho+1, \quad
    G(\rho)^{-1}=-\nabla\cdot(\rho \nabla).
  \end{equation*}
  \begin{equation*}
    \partial_t\rho=-G(\rho)^{-1}\delta\E(\rho)=-(-\nabla\cdot(\rho\nabla\log\rho))=\Delta\rho.
  \end{equation*}
\item[(iii)] Cross entropy formulation:\footnote {Given $\mu\in C^\infty(\Omega)$, the cross entropy is defined as follows
\begin{equation*}
\mathcal{H}(\rho, \mu)=-\int_{\Omega}\mu(x)\log\rho(x)dx
\end{equation*}
Here we let $\mu(x)=1$, for all $x\in\Omega$.} 
  \begin{equation*}
    \E(\rho)=-\int_\Omega \log\rho(x)dx,\quad \delta\E(\rho)=-\frac{1}{\rho}, \quad
    G(\rho)^{-1}=-\nabla\cdot(\rho^2\nabla).
  \end{equation*}
  \begin{equation*}
    \partial_t\rho=
    -G(\rho)^{-1}\delta\E(\rho)=-\left(-\nabla\cdot\left(\rho^2\nabla\left(-\frac{1}{\rho}\right)\right)\right)=\Delta\rho.
  \end{equation*}
\end{itemize}
It is clear that the metric $G$ and the energy $\E$ need to be compatible in order to give rise the
heat equation. In fact, given a strictly convex energy function $\E$ in the probability space, it
induces a compatible metric operator
\[
G(\rho)^{-1} := \Big(-\nabla\cdot([\delta^2\E(\rho)]^{-1}\nabla)\Big),
\]
which combines both the transport operator (gradient, divergence operator in $\Omega$) and the $L^2$ Hessian operator of $\E$. Following this relation, the heat equation can be viewed as the gradient
flow of the energy $\E$ under the $\E$-induced metric operator. Below we include the calculations of
the above three cases for the sake of completeness.
\begin{itemize}
\item[(i)] Dirichlet energy formulation:
  \[
    \E(\rho)=\int_\Omega \|\nabla\rho(x)\|^2dx, \quad \delta^2\E(\rho)=-\Delta,
  \]
  \[
  G(\rho)^{-1}=\Big(-\nabla\cdot([\delta^2\E(\rho)]^{-1}\nabla)\Big)
  =\Big(-\nabla\cdot([-\Delta]^{-1}\nabla)\Big)\\
  =\mathbb{I}.
  \]
\item[(ii)] Boltzmann-Shannon entropy formulation:
  \[
  \E(\rho)=\int_\Omega \rho(x)\log\rho(x)dx,\quad  \delta^2\E(\rho)=\frac{1}{\rho},
  \]
  \[
  G(\rho)^{-1}=-\nabla\cdot([\delta^2\E(\rho)]^{-1}\nabla)  =-\nabla\cdot(\rho\nabla).
  \]
\item[(iii)] Cross entropy formulation:
  \[
  \E(\rho)=-\int_\Omega\log\rho(x)dx, \quad \delta^2\E(\rho)=\frac{1}{\rho^2}
  \]
  \[
  G(\rho)^{-1}=-\nabla\cdot([\delta^2\E(\rho)]^{-1}\nabla)\\
  =-\nabla\cdot(\rho^2\nabla).
  \]
\end{itemize}

\subsection{Hessian transport Gradient flows}
In this subsection, we will make the discussion in Subsection \ref{mot} precise. Consider the set of
smooth and strictly positive densities
\begin{equation*}
\P(\Omega)=\Big\{\rho \in C^{\infty}(\Omega)\colon \rho(x)>0,~\int_\Omega\rho(x)dx=1\Big\}. 
\end{equation*}
The tangent space of $\P(\Omega)$ at $\rho\in \P(\Omega)$ is given by 
\begin{equation*}
  T_\rho\P(\Omega) = \Big\{\sigma\in C^{\infty}(\Omega)\colon \int_\Omega\sigma(x) dx=0 \Big\}.
\end{equation*}  
For a strictly convex entropy function $\H\colon\P(\Omega)\rightarrow\mathbb{R}$, we first define
the following $\H$-induced metric tensor in the probability space.

\begin{definition}[Hessian transport metric tensor]\label{metric}
  The inner product $G_{\H}(\rho)\colon
  {T_\rho}\P(\Omega)\times{T_\rho}\P(\Omega)\rightarrow\mathbb{R}$ is defined as for any $\sigma_1$
  and $\sigma_2\in T_\rho\P(\Omega)$:
  \begin{equation*} G_{\H}(\rho)(\sigma_1,
    \sigma_2)=\int_{\Omega}\int_\Omega\Big(\sigma_1(x),
    \Big(-\nabla\cdot([\delta^2\H(\rho)]^{-1}\nabla)\Big)^{-1}(x,y)\sigma_2(y)\Big)dxdy,
  \end{equation*}
  where $[\delta^2\H(\rho)]^{-1}$ is the inverse of $L^2$ Hessian operator of $\H$,
  and $$\Big(-\nabla\cdot([\delta^2\H(\rho)]^{-1}\nabla)\Big)^{-1}\colon
  {T_\rho}\P(\Omega)\rightarrow{T_\rho}\P(\Omega)$$ is the inverse of weighted
  elliptic operator $-\nabla\cdot([\delta^2\H(\rho)]^{-1}\nabla)$.
\end{definition}
The proposed metric tensor is an extension of the Wasserstein metric. To see it, we represent the metric tensor into a cotangent bundle \cite{Li2018_geometrya,otto2001}.  Denote the space of potential functions on $\Omega$ by $\F(\Omega)$ and consider the quotient space $\F(\Omega)/\mathbb{R}$. Here each $\Phi\in \F(\Omega)/ \mathbb{R}$ is a function defined up to an additive constant.

We first show that $\F(\Omega)/\mathbb{R}$ is the cotangent bundle $T_\rho^*\P(\Omega)$. Consider
the identification map $\mathbf{V}\colon\F(\Omega)/\mathbb{R} \rightarrow T_\rho\P(\Omega)$ defined
by
\begin{equation*}
  \mathbf{V}(\Phi)=-\nabla\cdot([\delta^2\mathcal{H}(\rho)]^{-1}\nabla\Phi).
\end{equation*} 
At any $\rho$, define the elliptic operator
\begin{equation}\label{LP}
L_{\H,\rho} =-\nabla\cdot([\delta^2\H(\rho)]^{-1}\nabla).
\end{equation}
The uniform elliptic property of $L_{\H,\rho}$ guarantees that
$\mathbf{V}\colon\F(\Omega)/\mathbb{R}\rightarrow T_\rho\P(\Omega)$ is well-defined, linear, and
one-to-one. In other words, $\F(\Omega)/\mathbb{R}= T_\rho^*\P(\Omega)$. This identification further
induces the following inner product on $T_\rho\P(\Omega)$.

\begin{definition}[Hessian transport metric on the cotangent bundle]\label{d9}
The inner product $G_{\H}(\rho) :T_\rho\P(\Omega)\times T_\rho\P(\Omega)\rightarrow \mathbb{R}$ is
defined as for any two tangent vectors $\sigma_1=\mathbf{V}(\Phi_1)$ and
$\sigma_2=\mathbf{V}(\Phi_2)\in T_\rho\P(\Omega)$ 
\begin{equation*}\begin{split}
    G_{\H}(\rho)(\sigma_1, \sigma_2)=&\int_\Omega\sigma_1\Phi_2dx=\int_\Omega\sigma_2\Phi_1dx\\
    =&\int_\Omega\int_\Omega(\nabla\Phi_1(x), [\delta^2\H(\rho)]^{-1}(x,y)\nabla\Phi_2(y)) dxdy.
  \end{split} 
\end{equation*}
\end{definition}

Here the equivalence of Definition \ref{metric} and \ref{d9} is shown as follows. By denoting
$\sigma_i(x)=\mathbf{V}(\Phi_i)=L_{\H,\rho}\Phi_i$ for $i=1,2$, i.e.
\begin{equation*}
  \sigma_i(x)=-\nabla\cdot\left(\int_\Omega[\delta^2\H(\rho)]^{-1}(x,y)\nabla\Phi_i(y)dy\right)(x),
\end{equation*}
one has
\begin{equation*}
\begin{split}
  &\int_\Omega\int_{\Omega}(\nabla\Phi_1, [\delta^2\H(\rho)]^{-1}\nabla\Phi_2) dxdy = \int_\Omega \int_\Omega(\Phi_1, L_{\H,\rho}\Phi_2)dxdy \\
  =&\int_\Omega\int_\Omega\mathbf{V}(\Phi_1) L_{\H,\rho}^{-1}L_{\H,\rho}L_{\H,\rho}^{-1}\mathbf{V}(\Phi_2) dxdy
  =\int_\Omega\int_\Omega \sigma_1 L_{\H,\rho}^{-1}\sigma_2dxdy,
\end{split}
\end{equation*}
where in the the first equality we apply the integration by parts with respect to $\Omega$ using the boundary condition.
\begin{remark}
  In particular, if $\H(\rho)=\int_\Omega \rho(x)\log\rho(x) dx$, then
  $[\delta^2\H(\rho)]^{-1}=\rho$ and the Hessian transport metric takes the form
  \begin{equation*}
    G_{\H}(\rho)(\sigma_1,\sigma_2)=\int_\Omega (\nabla\Phi_1,\nabla\Phi_2)\rho dx,
  \end{equation*}
  with $\sigma_i=-\nabla\cdot(\rho\nabla\Phi_i)$, $i=1,2$. In this case, the Hessian transport metric is the
  Wasserstein-2 metric \cite{otto2001, Villani2009_optimal}.
\end{remark}

We are now ready to introduce the gradient flows in $(\P(\Omega), G_{\H})$.
\begin{lemma}[Hessian transport Gradient flow]
Given an energy functional $\E\colon \P(\Omega)\rightarrow \mathbb{R}$, the
gradient flow of $\E$ in $(\P(\Omega), G_{\H})$ is
\begin{equation*}
  \partial_t\rho(t,x)=\nabla\cdot
  \left(\int_{\Omega}[\delta^2\H(\rho)]^{-1}(x,y)\nabla\delta\E(\rho)(y)dy\right).
\end{equation*}
\end{lemma}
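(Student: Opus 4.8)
The plan is to use the standard variational characterization of a Riemannian gradient together with the cotangent-bundle identification of Definition \ref{d9}. By definition, the gradient flow of $\E$ in the Riemannian manifold $(\P(\Omega),G_{\H})$ is $\partial_t\rho=-\mathrm{grad}_{G_\H}\E(\rho)$, where the gradient $\mathrm{grad}_{G_\H}\E(\rho)\in T_\rho\P(\Omega)$ is the unique tangent vector satisfying
\begin{equation*}
  G_{\H}(\rho)\big(\mathrm{grad}_{G_\H}\E(\rho),\sigma\big)=\int_\Omega \delta\E(\rho)\,\sigma\,dx
  \qquad\text{for all }\sigma\in T_\rho\P(\Omega).
\end{equation*}
Thus the entire task reduces to computing this gradient explicitly and then reading off its divergence-form expression.

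First I would invoke the identification $\mathbf{V}\colon\F(\Omega)/\mathbb{R}\to T_\rho\P(\Omega)$ to write the unknown gradient as $\mathrm{grad}_{G_\H}\E(\rho)=\mathbf{V}(\Phi)$ for a potential $\Phi$, which is legitimate because $\mathbf{V}$ was shown to be a linear bijection. Testing the defining relation against an arbitrary $\sigma\in T_\rho\P(\Omega)$ and using the cotangent expression of the metric in Definition \ref{d9}, the left-hand side becomes $\int_\Omega \sigma\,\Phi\,dx$. Equating this with $\int_\Omega \delta\E(\rho)\,\sigma\,dx$ for every mean-zero $\sigma$ forces $\Phi=\delta\E(\rho)$ up to an additive constant. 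Substituting $\Phi=\delta\E(\rho)$ into the formula $\mathbf{V}(\Phi)=-\div([\delta^2\H(\rho)]^{-1}\grad\Phi)$ then yields
\begin{equation*}
  \mathrm{grad}_{G_\H}\E(\rho)=-\div\Big(\int_\Omega[\delta^2\H(\rho)]^{-1}(x,y)\,\grad\delta\E(\rho)(y)\,dy\Big),
\end{equation*}
and inserting the minus sign of the gradient flow produces exactly the claimed (sign-free) generalized Fokker--Planck equation.

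The only genuine subtlety -- and the step I expect to require the most care -- is justifying each move in this chain rather than the algebra itself. Specifically, one must check that $\mathrm{grad}_{G_\H}\E(\rho)$ actually lies in $T_\rho\P(\Omega)$, i.e.\ has zero mean; this is automatic from the divergence form under the Neumann or periodic boundary conditions, which simultaneously makes the resulting $\partial_t\rho$ mass-preserving. One must also ensure the inversions implicit in $\mathbf{V}$ and in $G_{\H}(\rho)^{-1}$ are well posed: this is precisely the uniform ellipticity of $L_{\H,\rho}=-\nabla\cdot([\delta^2\H(\rho)]^{-1}\nabla)$ established above, which guarantees $\mathbf{V}$ is a bijection on the quotient $\F(\Omega)/\mathbb{R}$. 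Finally, the passage from ``$\int_\Omega(\delta\E(\rho)-\Phi)\,\sigma\,dx=0$ for all mean-zero $\sigma$'' to ``$\Phi=\delta\E(\rho)$ modulo a constant'' is harmless precisely because the metric only sees $\delta\E(\rho)$ through its pairing against mean-zero directions, so the undetermined constant never enters the flow.
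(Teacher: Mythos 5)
Your proof is correct, and it reaches the same conclusion by a dual organization of the paper's computation. The paper starts from the same variational characterization but parameterizes the arbitrary \emph{test} vector as $\sigma=\mathbf{V}(\Phi)$, works with Definition \ref{metric} (the inverse elliptic operator $L_{\H,\rho}^{-1}$ on the tangent side), and then performs two explicit integrations by parts on $\int_\Omega \delta\E(\rho)\,\sigma\,dx$ to peel the divergence off $\sigma$ and deposit it onto $\delta\E(\rho)$, finally identifying the gradient by comparing coefficients of the arbitrary potential $\Phi$. You instead parameterize the \emph{unknown} gradient as $\mathbf{V}(\Phi)$, invoke the cotangent form of the metric (Definition \ref{d9}) so that the pairing collapses to $\int_\Omega \sigma\,\Phi\,dx$ with no integration by parts (that work is already encapsulated in the proven equivalence of Definitions \ref{metric} and \ref{d9}), and conclude $\Phi=\delta\E(\rho)$ modulo constants by the fundamental lemma of the calculus of variations restricted to mean-zero test functions. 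What your route buys is conceptual clarity: it exhibits the Riemannian gradient as the metric ``raising'' of the $L^2$ differential, $\mathrm{grad}_{\H}\E(\rho)=\mathbf{V}(\delta\E(\rho))$, it handles the additive-constant ambiguity explicitly (the paper never addresses it, though $\mathbf{V}$ being defined on $\F(\Omega)/\mathbb{R}$ silently disposes of it), and it avoids the by-hand manipulations where the paper's write-up in fact slips (its final displayed gradient formula still reads $\nabla\Phi(y)$ where $\nabla\delta\E(\rho)(y)$ is meant). The paper's route is more self-contained in that it only uses Definition \ref{metric} directly, whereas yours leans on the previously established identification; both are complete given what precedes the lemma.
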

\begin{proof}
  The proof follows the definition. The Riemannian gradient in $(\P(\Omega), G_{\H})$ is defined as 
  \begin{equation}\label{GD}
    G_{\H}(\rho)(\sigma,
    \textrm{grad}_{\H}\E(\rho))=\int_{\Omega}\delta\E(\rho)(x)\sigma(x)dx,\quad\textrm{for
      any $\sigma(x)\in T_\rho\P(\Omega)$}.
  \end{equation}
  Denote 
  \begin{equation}\label{sigma}
    \sigma(x)=-\nabla\cdot\left(\int_\Omega [\delta^2\H(\rho)]^{-1}(x,y)\nabla\Phi(y)dy\right).
  \end{equation}
  Thus 
  \begin{equation*}
    \Phi(x)=\int_\Omega \Big(-\nabla\cdot([\delta^2\H(\rho)]^{-1}\nabla\Big)^{-1}(x,y)\sigma(y)dy.  
  \end{equation*}
  Notice that
  \begin{equation*}
    \begin{split}
      \textrm{L.H.S. of \eqref{GD}}=&G_{\H}(\rho)(\sigma, \textrm{grad}_{\H}\E(\rho))\\
      =&\int_{\Omega}\Big(\int_\Omega \Big(-\nabla\cdot([\delta^2\H(\rho)]^{-1}\nabla\Big)^{-1}(x,y)\sigma(y)dy\Big) \textrm{grad}_{\H}\E(\rho)(x)dx \\
      =&\int_\Omega \Phi(x)\textrm{grad}_{\H}\E(\rho)(x)dx,
    \end{split}
  \end{equation*}
  where we applies the definitions of the metric tensor and $\sigma$ in \eqref{sigma}. On the other
  hand,
  \begin{equation*}
    \begin{split}
    \textrm{R.H.S. of \eqref{GD}}=&\int_{\Omega}\delta\E(\rho)(x)\sigma(x)dx\\
    =&\int_\Omega\delta\E(\rho)(x)\Big(-\nabla\cdot(\int_\Omega[\delta^2\H(\rho)]^{-1}(x,y)\nabla\Phi(y)dy)\Big)dx\\
    =&\int_\Omega \int_\Omega \Big(\nabla\delta\E(\rho)(x),[\delta^2\H(\rho)]^{-1}(x,y)\nabla\Phi(y)\Big)dxdy\\
    =&\int_\Omega \Phi(y)\left(-\nabla\cdot\left(\int_\Omega [\delta^2\H(\rho)]^{-1}(x,y)\nabla\delta\E(\rho)(x)dx\right) \right)dy,
    \end{split}
  \end{equation*}
  where the second equality is obtained by integration by parts with respect to $x$ and the third
  equality holds by integration by parts with respect to $y$. Interchanging $x$ and $y$ in the R.H.S. and comparing the L.H.S. and R.H.S. of \eqref{GD} for any
  $\Phi$, we obtain the gradient operator 
  \begin{equation*}
    \textrm{grad}_{\H}\E(\rho)(x)=-\nabla\cdot\left(\int_\Omega[\delta^2\H(\rho)]^{-1}(x,y)\nabla\Phi(y)dy\right).    
  \end{equation*}
  Thus the Riemannian gradient flow in $(\P(\Omega), G_{\H})$ satisfies
  \begin{equation*}
    \partial_t\rho(t,x)=-\textrm{grad}_{\H}\E(\rho)(t,x)=
    \nabla\cdot\left(\int_\Omega[\delta^2\H(\rho)]^{-1}(x,y)\nabla\delta\E(\rho)(y)dy\right).
  \end{equation*}
\end{proof}

In particular, when $\E(\rho)=\H(\rho)=\int_\Omega f(\rho(x)) dx$, the gradient flow of $\H$ in
$(\P(\Omega), G_{\H})$ satisfies the heat equation \eqref{heat} because
\begin{equation*}
  \delta \H(\rho)(x)=f'(\rho)(x), \quad \delta^2\H(\rho)(x,y)=f''(\rho)(x)\delta_{x=y}
\end{equation*}
and
\[
\textrm{grad}_{\H}\H(\rho)(x)=
-\nabla\cdot\left( \frac{1}{f''(\rho)(x)} \nabla (f'(\rho)(x))\right)
=-\nabla\cdot\left(\frac{1}{f''(\rho)(x)}f''(\rho)(x)\nabla\rho(x)\right)  
=-\Delta\rho(x).
\]
The gradient flow of $\H(\rho)$ in $(\P(\Omega),G_{\H}(\rho))$ is then given by
\[
\partial_t\rho(t,x)=-\textrm{grad}_{\H}\H(\rho)(t,x)=-(-\Delta\rho(t,x))=\Delta\rho(t,x),    
\]
which is the heat equation as demonstrate in Subsection \ref{mot}. 

\subsection{Divergence and Hessian transport SDE}
By taking $\E$ to be the divergence function associated with the entropy $\H$, we derive here a
class of generalized Fokker-Planck equations as the gradient flows under the Hessian transport
metrics. In addition, we also give the associated Hessian transport stochastic differential
equations (HT-SDEs).

To the entropy function $\H_f(\rho)=\int_\Omega f(\rho(x)) dx$, we can associate a corresponding
divergence function:
\begin{equation*}
  \D_{f}(\rho\|\mu)=\int_\Omega f\left(\frac{\rho(x)}{\mu(x)}\right) \mu(x)dx.
\end{equation*}
Here $\rho$, $\mu\in\P(\Omega)$ and $f\colon \mathbb{R}\rightarrow\mathbb{R}$ is a convex function such that $f(1)=0$. In the literature, $\D_f(\cdot\|\cdot)$ is called the $f$-divergence function.

\begin{theorem}[Hessian transport stochastic differential equations]\label{thm1}
  Given a reference measure $\mu\in\P(\Omega)$, the gradient flow of $\D_{f}(\rho\|\mu)$ in $(\P(\Omega), G_{\H_f})$ satisfies
  \begin{equation}\label{FPE}
    \partial_t\rho(t,x)=\nabla\cdot \left( f''(\rho)(t,x)\nabla f'\left(\frac{\rho}{\mu}\right)(t,x) \right).    
  \end{equation}
  In addition, when $f''(\cdot)$ is homogeneous of degree $-\gamma$, i.e.,
  \[   f''(t)=f''(1) t^{-\gamma}.
  \] 
  The equation \eqref{FPE} can be simplified
  \begin{equation}\label{FPEGD}
    \partial_t\rho(t,x)= \nabla\cdot\left(\mu(x)^{\gamma} \grad \left(\frac{\rho(t,x)}{\mu(x)}\right) \right),
  \end{equation}
  and it is the Kolmogorov forward equation of the stochastic differential equation
  \begin{equation}\label{HT-SDE}
    dX_t = \gamma \mu(X_t)^{\gamma-2} \nabla \mu(X_t) dt + \sqrt{2 \mu(X_t)^{\gamma-1}} dB_t,   
  \end{equation}
  where $B_t$ is the standard Brownian motion in $\Omega$.
\end{theorem}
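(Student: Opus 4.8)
The plan is to specialize the Hessian transport gradient flow lemma to the energy $\E=\D_f(\rho\|\mu)$ and the entropy $\H=\H_f$, and then to carry out two explicit reductions: one using the homogeneity of $f''$ to collapse the general flux into \eqref{FPEGD}, and one matching \eqref{FPEGD} against the Kolmogorov forward operator of \eqref{HT-SDE}. Throughout, only $\nabla\delta\E$ enters the flow, so the first variation need only be determined up to an additive constant, consistent with the cotangent identification $\F(\Omega)/\mathbb{R}=T_\rho^*\P(\Omega)$.

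First I would assemble the two ingredients the lemma requires. Since $\H_f(\rho)=\int_\Omega f(\rho)\,dx$ is a local functional, its $L^2$ Hessian is the multiplication operator $\delta^2\H_f(\rho)(x,y)=f''(\rho(x))\,\delta_{x=y}$, so its inverse is multiplication by $1/f''(\rho)$ carried by a delta kernel; this is the same computation that produces $[\delta^2\H]^{-1}=\rho$ in the Boltzmann--Shannon case of the Remark. For the first variation of the divergence, I would differentiate $\D_f(\rho\|\mu)=\int_\Omega f(\rho/\mu)\mu\,dx$ along $\sigma\in T_\rho\P(\Omega)$; by the chain rule the factor $1/\mu$ from the argument cancels the weight $\mu$, leaving $\delta\D_f(\rho\|\mu)(x)=f'(\rho(x)/\mu(x))$. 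Substituting both into the lemma collapses the convolution against the delta kernel into pointwise multiplication and yields the flux form $\partial_t\rho=\nabla\cdot\big((1/f''(\rho))\,\nabla f'(\rho/\mu)\big)$, that is, \eqref{FPE}.

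Next I would impose the homogeneity hypothesis $f''(t)=f''(1)\,t^{-\gamma}$, which makes both factors of the flux explicit. On one hand $\nabla f'(\rho/\mu)=f''(\rho/\mu)\nabla(\rho/\mu)=f''(1)(\rho/\mu)^{-\gamma}\nabla(\rho/\mu)$, and on the other the Hessian-inverse weight is $1/f''(\rho)=\rho^{\gamma}/f''(1)$. The two copies of $f''(1)$ cancel and $\rho^{\gamma}(\rho/\mu)^{-\gamma}=\mu^{\gamma}$, so the flux reduces exactly (with no leftover constant) to $\mu^{\gamma}\nabla(\rho/\mu)$, giving \eqref{FPEGD}.

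The last and most delicate step is to identify \eqref{FPEGD} as the Kolmogorov forward equation of \eqref{HT-SDE}. For an It\^o diffusion $dX_t=b\,dt+\sigma\,dB_t$ with scalar diffusion $\sigma\sigma^\top=2\mu^{\gamma-1}\mathbb{I}$ and drift $b=\gamma\mu^{\gamma-2}\nabla\mu$, the forward operator reads $\partial_t\rho=-\nabla\cdot(b\rho)+\Delta(\mu^{\gamma-1}\rho)$. The plan is to expand both sides into the same two irreducible pieces. Writing $\nabla(\rho/\mu)=\mu^{-1}\nabla\rho-\rho\mu^{-2}\nabla\mu$ gives $\nabla\cdot(\mu^{\gamma}\nabla(\rho/\mu))=\nabla\cdot(\mu^{\gamma-1}\nabla\rho)-\nabla\cdot(\mu^{\gamma-2}\rho\nabla\mu)$; meanwhile the product rule yields $\Delta(\mu^{\gamma-1}\rho)=\nabla\cdot(\mu^{\gamma-1}\nabla\rho)+(\gamma-1)\nabla\cdot(\mu^{\gamma-2}\rho\nabla\mu)$, and combining this with the drift term $-\gamma\nabla\cdot(\mu^{\gamma-2}\rho\nabla\mu)$ collapses the two first-order contributions with coefficient $(\gamma-1)-\gamma=-1$, reproducing exactly the cross-term above. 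The hard part is precisely this bookkeeping of $\mu$-powers: the drift is chosen so that the spurious first-order term generated when the Laplacian hits the variable diffusion coefficient $\mu^{\gamma-1}$ merges with it to leave the single advection term of \eqref{FPEGD}. I expect the matching of these coefficients, rather than any conceptual difficulty, to be where care is needed.
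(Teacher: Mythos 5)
Your proposal is correct, and on the PDE part it follows essentially the same path as the paper: you compute $\delta\D_f(\rho\|\mu)=f'\left(\frac{\rho}{\mu}\right)$, identify $[\delta^2\H_f(\rho)]^{-1}$ with multiplication by $1/f''(\rho)$ via the delta-kernel structure of the $L^2$ Hessian, and your cancellation of the two factors $f''(1)$ together with $\rho^{\gamma}(\rho/\mu)^{-\gamma}=\mu^{\gamma}$ is exactly the paper's identity $f''\left(\frac{\rho}{\mu}\right)=\mu^{\gamma}f''(\rho)$. One point worth flagging: your flux $f''(\rho)^{-1}\nabla f'\left(\frac{\rho}{\mu}\right)$ is the correct one; the display \eqref{FPE} in the statement writes $f''(\rho)$ where $f''(\rho)^{-1}$ is meant (the paper's own proof, its de Bruijn corollary, and the reduction to \eqref{FPEGD} all require the inverse), so you have silently corrected a typo rather than introduced an error. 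Where you genuinely diverge is the SDE step. The paper \emph{derives} the SDE: it passes from the forward operator $L^{*}=\nabla\cdot\left(\mu^{\gamma}\nabla\left(\tfrac{1}{\mu}\,\cdot\right)\right)$ to its adjoint, the backward operator $L=\tfrac{1}{\mu}\nabla\cdot\left(\mu^{\gamma}\nabla\,\cdot\right)$, expands $Lu=\gamma\mu^{\gamma-2}\nabla\mu\cdot\nabla u+\mu^{\gamma-1}\Delta u$, and reads the drift and diffusion coefficients off the generator. You instead \emph{verify}: starting from the stated SDE you write its Fokker--Planck equation $\partial_t\rho=-\nabla\cdot(b\rho)+\Delta(\mu^{\gamma-1}\rho)$ and expand both it and \eqref{FPEGD} into the pieces $\nabla\cdot(\mu^{\gamma-1}\nabla\rho)$ and $\nabla\cdot(\mu^{\gamma-2}\rho\nabla\mu)$, with the coefficient bookkeeping $(\gamma-1)-\gamma=-1$ closing the match; this computation is correct. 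The two routes are adjoint to one another and cost the same algebra: the paper's is constructive and explains where the drift $\gamma\mu^{\gamma-2}\nabla\mu$ comes from, while yours is a self-contained check of the claim exactly as stated, avoiding the need to justify that matching first- and second-order coefficients of the generator determines the diffusion.
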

\begin{proof}
We first derive the gradient flow in $(\P, G_{\H_f})$. Notice that 
\begin{equation}\label{df}
\delta\D_f(\rho\|\mu)(x):=\frac{\delta}{\delta\rho(x)} \D_f(\rho\|\mu)= f'\left(\frac{\rho}{\mu}\right)(x),
\end{equation}
and the transport metric is 
\begin{equation*}
G_{\H_f}(\rho)=\Big(-\nabla\cdot(f''(\rho)^{-1}\nabla)\Big)^{-1}.  
\end{equation*}
Thus the gradient flow of $\D_{f}(\rho\|\mu)$ in $(\P(\Omega), G_{\H_f})$ satisfies 
\begin{equation*}
\begin{split}
  \partial_t\rho(t,x)=&-G_{\H_f}(\rho)^{-1}\delta \H_{f}(\rho\|\mu)
  =-\Big([-\nabla\cdot(f''(\rho)^{-1}\nabla)]^{-1}\Big)^{-1}f'\left(\frac{\rho}{\mu}\right)\\
  =&\nabla\cdot\left(f''(\rho)^{-1}\nabla f'\left(\frac{\rho}{\mu}\right)\right)
  = \nabla\cdot\left(f''(\rho)^{-1}f''\left(\frac{\rho}{\mu}\right)  \nabla\left(\frac{\rho}{\mu}\right) \right).
\end{split}
\end{equation*}
Notice $f''\left(\frac{\rho}{\mu}\right)=\mu^{\gamma} f''(\rho)$ with $\gamma\in \mathbb{R}$
due to the homogeneity assumption. Then the gradient flow \eqref{FPE} can be simplified to
\[
\partial_t\rho(t,x)= \nabla\cdot\left(\mu(x)^{\gamma} \grad \left(\frac{\rho(t,x)}{\mu(x)}\right) \right).
\]
This equation is a Kolmogorov forward equation (see for example
\cite{Oksendal2013_stochastic,Pavliotis2014}) for the density evolution $\partial_t \rho = L^* \rho$
with the forward operator given by $L^* = \div (\mu^{\gamma}\cdot \grad (\frac{1}{\mu}\cdot ))$. In
order to obtain the corresponding stochastic differential equation, we write down its adjoint
equation, i.e., the Kolmogorov backward equation $\partial_t u = Lu$ for functions on $\Omega$ with
$L = \left(\frac{1}{\mu}\cdot\right) \div \left( \mu^{\gamma}\cdot\grad\right)$. More precisely:
\[
\partial_t u = \left(\frac{1}{\mu}\cdot\right) \div \left( \mu^{\gamma} \cdot\grad u\right)
= \gamma (\mu(x)^{\gamma-2} \grad \mu(x)) \cdot \grad u(x) + \mu(x)^{\gamma-1} \Delta u(x).
\]
By identifying the drift coefficient before $\grad u(x)$ and the diffusion coefficient before
$\Delta u(x)$, we arrive at the corresponding stochastic differential equations
\begin{equation}
   dX_t = \gamma \mu(X_t)^{\gamma-2} \grad \mu(X_t) dt + \sqrt{2 \mu(X_t)^{\gamma-1}} dB_t,
   \label{HT-SDEgamma}
\end{equation}
which finishes the proof. 
\end{proof}

Following the gradient flow relation \eqref{FPEGD}, the reference measure $\mu$ is the invariant
measure for the HT-SDE \eqref{HT-SDE}. We next derive a generalized de Bruijn identity that
characterizes the dissipation of the divergence function along the gradient flow.
\begin{corollary}[Hessian transport de Bruijn identity]
  Suppose $\rho(t,x)$ satisfies \eqref{FPE}, then
  \begin{equation*}
    \frac{d}{dt}\D_{f}(\rho(t,\cdot)\|\mu)=-I_{f}(\rho(t,\cdot)\|\mu),
  \end{equation*}
  where the $f$-relative Fisher information functional $I_{f}(\rho\|\mu)$ is given by
  \begin{equation}\label{GF}
    I_{f}(\rho\|\mu)=\int_\Omega\left\|\nabla f'\left(\frac{\rho}{\mu}\right)\right\|^2f''(\rho)^{-1}dx.
  \end{equation}
\end{corollary}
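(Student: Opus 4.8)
The plan is to differentiate the divergence functional along the flow and reduce everything to a single integration by parts, exactly as one does for the classical de Bruijn identity. First I would write, by the chain rule for the first variation,
\[
\frac{d}{dt}\D_f(\rho(t,\cdot)\|\mu) = \int_\Omega \delta\D_f(\rho\|\mu)(x)\,\partial_t\rho(t,x)\,dx,
\]
and invoke the first-variation formula \eqref{df}, which gives $\delta\D_f(\rho\|\mu)=f'(\rho/\mu)$. This turns the time derivative into the $L^2$ pairing of $f'(\rho/\mu)$ against the right-hand side of the gradient flow.

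Next I would substitute the gradient-flow form derived in the proof of Theorem \ref{thm1}, namely $\partial_t\rho = \nabla\cdot(f''(\rho)^{-1}\nabla f'(\rho/\mu))$, so that
\[
\frac{d}{dt}\D_f(\rho\|\mu) = \int_\Omega f'\!\left(\frac{\rho}{\mu}\right)\,\nabla\cdot\!\left(f''(\rho)^{-1}\nabla f'\!\left(\frac{\rho}{\mu}\right)\right)dx.
\]
Integrating by parts in $x$ and discarding the boundary term via the standing Neumann or periodic boundary conditions on $\Omega$, the divergence is moved onto $f'(\rho/\mu)$, giving
\[
\frac{d}{dt}\D_f(\rho\|\mu) = -\int_\Omega \left(\nabla f'\!\left(\frac{\rho}{\mu}\right),\, f''(\rho)^{-1}\nabla f'\!\left(\frac{\rho}{\mu}\right)\right)dx = -\int_\Omega \left\|\nabla f'\!\left(\frac{\rho}{\mu}\right)\right\|^2 f''(\rho)^{-1}\,dx,
\]
which is precisely $-I_f(\rho\|\mu)$ by the definition \eqref{GF}.

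Conceptually this is nothing more than the general fact that along a Riemannian gradient flow $\partial_t\rho = -\mathrm{grad}_{\H}\D_f$ the energy dissipates at rate $-G_{\H}(\rho)(\mathrm{grad}_{\H}\D_f,\mathrm{grad}_{\H}\D_f)$, specialized to the present metric; one could equivalently unfold the metric using Definition \ref{d9} to reach the same expression. Accordingly I do not expect a serious obstacle here. The only step requiring genuine care is the boundary term in the integration by parts, which vanishes precisely because of the standing Neumann or periodic assumption; beyond that one needs the mild regularity that permits differentiation under the integral sign, together with the strict convexity of $f$ ensuring that $f''(\rho)^{-1}$ is well defined and positive. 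Finally, it is worth recording that convexity gives $f''(\rho)^{-1}\ge 0$, hence $I_f(\rho\|\mu)\ge 0$, so that the identity exhibits $\D_f(\cdot\|\mu)$ as a Lyapunov functional that is monotonically non-increasing along the flow.
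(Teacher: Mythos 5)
Your proof is correct and follows essentially the same route as the paper: apply the chain rule with the first variation $\delta\D_f=f'(\rho/\mu)$ from \eqref{df}, substitute the gradient-flow form $\partial_t\rho=\nabla\cdot(f''(\rho)^{-1}\nabla f'(\rho/\mu))$, and integrate by parts using the boundary conditions to obtain $-I_f(\rho\|\mu)$. (In fact your version is slightly cleaner: the paper's first displayed line carries a spurious minus sign in front of $\int_\Omega \delta\D_f\,\partial_t\rho\,dx$, which your chain-rule step states correctly.)
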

\begin{proof}
  The proof follows the dissipation of energy along gradient flows in the probability space. Notice
  that
  \begin{equation*}
    \begin{split}
      \frac{d}{d t}\D_{f}(\rho(t,\cdot)\|\mu)=&-\int_{\Omega}\delta \D_{f}(\rho(t,\cdot)\|\mu)\partial_t\rho dx \\ 
      =&\int_\Omega\delta \D_{f}(\rho(t,\cdot)\|\mu)\nabla\cdot(f''(\rho)^{-1}\nabla\delta \D_{f}(\rho(t,\cdot)\|\mu))dx\\
      =&-\int_\Omega(\nabla\delta \D_{f}(\rho(t,\cdot)\|\mu), \nabla\delta \D_{f}(\rho(t,\cdot)\|\mu))f''(\rho)^{-1}dx\\
      =&-\int_\Omega \left\|\nabla f'(\frac{\rho}{\mu})\right\|^2f''(\rho)^{-1}dx,
    \end{split}
\end{equation*}
where the last equality holds by formula \eqref{df}.
\end{proof}
Consider the case $f(\rho)=\rho\log\rho$. The $f$-entropy is the negative Boltzmann-Shannon entropy
$\int_\Omega\rho\log\rho dx$ and the $f$-divergence is the usual relative entropy
\begin{equation*}
  \D_f(\rho\|\mu)=\int_\Omega \frac{\rho(x)}{\mu(x)} \log\frac{\rho(x)}{\mu(x)} \mu(x) dx =
  \int_\Omega \rho(x) \log\frac{\rho(x)}{\mu(x)} dx.
\end{equation*}
In this case, $\delta \D_f(\rho\|\mu)=\log\frac{\rho}{\mu}+1$, thus 
\begin{equation*}
  \frac{d}{dt}\D_f(\rho(t,\cdot)\|\mu)=-\int_\Omega \left\|\nabla \log\frac{\rho(t,x)}{\mu(x)}\right\|^2\rho(t,x)dx. 
\end{equation*}
Here we recover the classical result that the dissipation of the relative entropy is equal to the
negative relative Fisher information functional. Our result extends this relation to any
$f$-divergence functions. For this reason, $I_{f}$ in \eqref{GF} is called the {\em $f$-relative
  Fisher information functional}.

\begin{remark}
Here we demonstrate the relations between our approaches and the ones in literature
\cite{ZozorBrossier2015_debruijn, CFPE, CaoLuLu2018_exponential}.  The generalized de Bruijn identity
and $f$-relative Fisher information functional \eqref{GF} recovers exactly the ones in
\cite{ZozorBrossier2015_debruijn} when $\mu$ is a uniform measure. They differ from
\cite{ZozorBrossier2015_debruijn} when $\mu$ is a non-uniform reference measure. Our approach always
generalizes the entropy dissipation as the geometric dissipation as gradient flows of the
probability manifold $(\P(\Omega), G_{\H})$, while \cite{ZozorBrossier2015_debruijn} studies the
dissipation of relative entropy among two heat flows for two variables in the divergence
function. Our approach is also different from the one in \cite{CFPE}. We derive a class of
Fokker-Planck equation \eqref{FPEGD} with parameter $\gamma$, while \cite{CFPE} studies the
Fokker-Planck equation\eqref{FPEGD} with $\gamma=1$.  Lastly, our approach differs from
\cite{CaoLuLu2018_exponential}. While \cite{CaoLuLu2018_exponential} proposes a {\em
  reference-measure-dependent} metric under which the Fokker-Planck equation \eqref{FPEGD} with
$\gamma=1$ is the gradient flow of the Renyi entropy, our approach introduces a class of {\em
  reference-measure-independent} metrics. They only depend on the $L^2$ Hessian operator of the
convex entropy function and allow us to derive a new class of Fokker-Planck equations \eqref{FPEGD}.
\end{remark}

\subsection{Examples}
Below we consider a few special but important cases of $f$-divergences, and present the
$f$-divergence induced HT-SDE in Theorem \ref{thm1}.
\begin{example}[KL divergence HT-SDE]
  \[  f(\rho) = \rho\log \rho,\quad f'(\rho) = \log \rho + 1,\quad f''(\rho) = 1/\rho, \quad \gamma = 1.   \]
  The gradient flow, the HT-SDE, and the relative Fisher information functional are, respectively,
  \[  \partial_t\rho(t,x)= \div \left(\mu(x) \grad \left(\frac{\rho(t,x)}{\mu(x)}\right) \right),   \]
  \[   dX_t = \mu(X_t)^{-1} \grad \mu(X_t) dt + \sqrt{2} dB_t,  \]
  \[ \I_f(\rho\|\mu)=\int_\Omega \left\|\nabla\log\frac{\rho(x)}{\mu(x)}\right\|^2\rho(x) dx.  \]
\end{example}

\begin{example}[Reverse KL divergence HT-SDE]
  \[   f(\rho) = -\log \rho, \quad f'(\rho)=-1/\rho, \quad f''(\rho)=1/\rho^2, \quad \gamma =2.   \]
  The gradient flow, the HT-SDE, and the relative Fisher information functional are, respectively,
  \[   \partial_t\rho(t,x) = \div \left(\mu(x)^2 \grad \left(\frac{\rho(t,x)}{\mu(x)}\right) \right),   \]
  \[   dX_t = 2 \grad \mu(X_t) dt + \sqrt{2\mu(X_t)} dB_t,  \]
  \[  \I_f(\rho\|\mu)=\int_\Omega \left\|\nabla\Big(\frac{\rho(x)}{\mu(x)}\Big)^{-1}\right\|^2\rho(x)^2 dx. \]
\end{example}

\begin{example}[$\alpha$-divergence HT-SDE]
  \[   f(\rho) = \frac{4}{1-\alpha^2} (1-\rho^{\frac{1+\alpha}{2}}), \quad f'(\rho)=\frac{2}{\alpha-1} \rho^{\frac{\alpha-1}{2}}, 
  \quad f''(\rho)=\rho^{\frac{\alpha-3}{2}}, \quad \gamma =\frac{3-\alpha}{2}.
  \]
  The gradient flow, the HT-SDE, and the relative Fisher information functional are, respectively,
  \[  \partial_t\rho(t,x) = \div \left(\mu(x)^{(3-\alpha)/2} \grad \left(\frac{\rho(t,x)}{\mu(x)}\right) \right),   \]
  \[  dX_t = \frac{3-\alpha}{2} \mu(X_t)^{\frac{-1-\alpha}{2}} \grad \mu(X_t) dt + \sqrt{2\mu(X_t)^{\frac{1-\alpha}{2}}} dB_t,  \]
  \[ \I_f(\rho\|\mu)=\left(\frac{2}{\alpha-1}\right)^2\int_\Omega
  \left\|    \nabla\left(\frac{\rho(x)}{\mu(x)}\right)^{\frac{\alpha-1}{2}}    \right\|^2
  \rho(x)^{\frac{3-\alpha}{2}} dx. \]
\end{example}

\begin{example}[Hellinger distance HT-SDE]
  \[
  f(\rho) = (\sqrt{\rho}-1)^2,
  \]
  and it is a special case of $\alpha$-divergence with $\alpha=0$ and hence $\gamma=3/2$.
  The gradient flow, the HT-SDE, and the relative Fisher information functional are, respectively,
  \[  \partial_t\rho(t,x) = \div \left(\mu(x)^{3/2} \grad \left(\frac{\rho(t,x)}{\mu(x)}\right) \right),\]
  \[  dX_t = \frac{3}{2} \mu(X_t)^{-\frac{1}{2}} \grad \mu(X_t) dt + \sqrt{2\mu(X_t)^{\frac{1}{2}}} dB_t, \]
  \[
  \I_f(\rho\|\mu)=4\int_\Omega
  \left\|\nabla\left(\frac{\rho(x)}{\mu(x)}\right)^{-1/2}\right\|^2
  \rho(x)^{\frac{3}{2}} dx.
  \]
\end{example}

\begin{example}[Pearson divergence HT-SDE]
  \[
  f(\rho) = (\rho-1)^2,
  \]
  and it is a special case of $\alpha$-divergence with $\alpha=3$ and hence $\gamma=0$.
  The gradient flow, the HT-SDE, and the relative Fisher information functional are, respectively,
  \[  \partial_t\rho(t,x) = \div \left( \grad \left(\frac{\rho(t,x)}{\mu(x)}\right) \right),  \]
  \[  dX_t = \sqrt{2\mu(X_t)^{-1}}dB_t,    \]
  \[
  \I_f(\rho\|\mu)=\int_\Omega \left\|\nabla\Big(\frac{\rho(x)}{\mu(x)}\Big)\right\|^2dx.
  \]
\end{example}

\begin{example}[Jensen-Shannon divergence HT-SDE]
  \[
  f(\rho) = -(\rho+1)\log\frac{1+\rho}{2} + \rho\log \rho,\quad  f'(\rho) = -\log\frac{1+\rho}{2} + \log \rho,\quad  f''(\rho) = \frac{1}{\rho(1+\rho)}
  \]
  The above discussion does not apply since $f''(\rho)$ is not homogeneous in $\rho$. However, one
  can still obtain a gradient flow PDE
  \[
  \partial_t\rho(t,x) = \div \left( \frac{(1+\rho(t,x))\mu(x)^2}{(\rho(t,x)+\mu(x))} \grad \left(\frac{\rho(t,x)}{\mu(x)}\right) \right),
  \]
  and the $f$--relative Fisher information functional  
  \begin{equation*}
    \I_f(\rho\|\mu)=\int_\Omega \left\|\nabla\log(\frac{2\rho}{\rho+\mu})\right\|^2\rho(1+\rho)dx.
  \end{equation*}
\end{example}

\section{Hessian transport distance}\label{section3}

In this section, we introduce the Hessian transport distance in the probability space following the
metric tensor proposed in Definition \ref{metric} and derive the geodesic equations.

\begin{definition}[Hessian transport distance]
  Given a convex entropy function $\H$, the distance function
  $W_{\H}\colon\P(\Omega)\times\P(\Omega)\rightarrow \mathbb{R}$ between two densities $\rho^0$ and
  $\rho^1$ is 
  \begin{equation}\label{HT}
    W_{\H}(\rho^0,\rho^1) = \left( \inf_{v,\rho}\int_0^1\int_{\Omega}\int_{\Omega}
    \left(v(t,x), [\delta^2\H(\rho)]^{-1}(x,y)v(t,y) \right) dx dy dt \right)^{1/2},
  \end{equation}
  such that the infimum is taken among all density path $\rho\colon [0,1]\times\Omega\rightarrow
  \mathbb{R}$ and vector field $v\colon[0,1]\times \Omega\rightarrow T\Omega$, satisfying 
\begin{equation*}
  \partial_t\rho(t,x)+\nabla\cdot \left(\int_\Omega [\delta^2\H(\rho)]^{-1}(x,y)v(t,y)dy \right) =0,
\end{equation*}
with $\rho(0,x) = \rho^0(x)$ and $\rho(1,x)=\rho^1(x)$.
\end{definition}

We first illustrate that $W_{\H}$ is a Riemannian distance. For a fixed density $\rho(x)$, the {\em Hodge decomposition} for a vector function $v(x)$ is
\[
v(x)=\nabla\Phi(x)+u(x),
\]
where $\Phi\colon \Omega\rightarrow\mathbb{R}$ and $u\colon \Omega\rightarrow T_x\Omega$ is the divergence free vector for
$\rho$ in the following sense
\begin{equation}\label{df}
  \nabla\cdot \left(\int_\Omega [\delta^2\H(\rho)]^{-1}(x,y)u(y)dy \right)(x)=0.    
\end{equation}
Thus
\begin{equation*}
  \begin{split}
    &\int_\Omega\int_\Omega (v(x), [\delta^2\H(\rho)]^{-1}(x,y)v(y))dxdy\\
    =&\int_\Omega\int_\Omega (\nabla\Phi(x), [\delta^2\H(\rho)]^{-1}(x,y)\nabla\Phi(y))dxdy+
    \int_\Omega\int_\Omega (u(x), [\delta^2\H(\rho)]^{-1}(x,y)u(y))dxdy\\
    &+2\int_\Omega\int_\Omega (u(x), [\delta^2\H(\rho)]^{-1}(x,y)\nabla\Phi(y))dxdy\\
        =&\int_\Omega\int_\Omega (\nabla\Phi(x), [\delta^2\H(\rho)]^{-1}(x,y)\nabla\Phi(y))dxdy+
    \int_\Omega\int_\Omega (u(x), [\delta^2\H(\rho)]^{-1}(x,y)u(y))dxdy\\
    \\
    \geq &\int_\Omega\int_\Omega (\nabla\Phi(x), [\delta^2\H(\rho)]^{-1}(x,y)\nabla\Phi(y))dxdy,
  \end{split}
\end{equation*}
where the second equality uses the divergence free relation \eqref{df}. Thus the minimization
problem \eqref{HT} is same as the one over variable $(\rho(t,x), \Phi(t,x))$, where $\Phi(t,x)$ is
the first part of the Hodge decomposition of $v(t,x)$. By denoting
$\partial_t\rho(t,x)=L_{\H,\rho}\Phi(t,x)$ with $L_{\H,\rho}$ defined in \eqref{LP}, we arrive at
\begin{equation*}
  \begin{split}
    G_{\H}(\rho)(\partial_t\rho, \partial_t\rho)=&\int_\Omega\int_\Omega \Big(\partial_t\rho(t,x), L_{\H,\rho}^{-1} \partial_t\rho(t,y)\Big)dxdy\\
    =& \int_\Omega\int_\Omega \Big(L_{\H,\rho}\Phi(t,x), L_{\H,\rho}^{-1} L_{\H,\rho}\Phi(t,y)\Big) dxdy\\
    =&\int_\Omega\int_\Omega (\Phi(t,x),L_{\H,\rho}\Phi(t,y))dxdy.
\end{split}    
\end{equation*}
Thus the distance function defined in \eqref{HT} can be formulated as 
\begin{equation*}
  \Big(W_{\H}(\rho^0,\rho^1)\Big)^2=\inf_{\rho\colon[0,1]\rightarrow\P(\Omega)}\Big\{\int_0^1
  G_{\H}(\rho)(\partial_t\rho, \partial_t\rho) dt \colon \textrm{$\rho^0$, $\rho^1$ fixed} \Big\}.
\end{equation*}
This is exactly the geometric action functional in $(\P(\Omega), G_{\H})$ and therefore $W_{\H}$ is
a Riemannian distance on $\P(\Omega)$.

Next, we prove that the distance is well-defined and derive the formulations of geodesics
equations.
\begin{theorem}[Hessian transport geodesic (HT-geodesic)]
  The Hessian transport distance is well-defined in $\P(\Omega)$,
  i.e. $W_{\H}(\rho^0,\rho^1)<+\infty$. The geodesic equation is
  \begin{equation}\label{geo}
    \left\{
    \begin{aligned}
      &\partial_t\rho(t,x)+\nabla\cdot (\int_\Omega [\delta^2\H(\rho)]^{-1}(x,y)\nabla\Phi(t,y)dy)=0\\
      &\partial_t\Phi(t,x)+\frac{1}{2}\delta_{\rho}\int_\Omega\int_\Omega \nabla\Phi(t,x), [\delta^2\H(\rho)]^{-1}(x,y)\nabla\Phi(t,y)dxdy=0.
    \end{aligned}\right.
  \end{equation}
  If $\H(\rho)=\mathcal{H}_f(\rho)=\int_\Omega f(\rho)(x)dx$, the geodesic equation simplifies to
  \begin{equation*}
    \left\{\begin{aligned}
    &\partial_t\rho+\nabla\cdot(f''(\rho)^{-1}\nabla\Phi)=0\\
    &\partial_t\Phi-\frac{1}{2}(\nabla\Phi, \nabla\Phi)\frac{f'''(\rho)}{f''(\rho)^2}=0.
    \end{aligned}
    \right.
  \end{equation*}
\end{theorem}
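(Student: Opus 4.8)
The plan is to treat the theorem as two separate assertions: the finiteness of $W_{\H}(\rho^0,\rho^1)$ and the derivation of the geodesic system \eqref{geo}. For finiteness, I would exhibit a single admissible path with finite action. The natural choice is the linear interpolation $\rho(t,\cdot)=(1-t)\rho^0+t\rho^1$, which remains in $\P(\Omega)$ because positivity, smoothness and the unit-mass constraint are all preserved by convex combinations. Its velocity is the fixed mean-zero function $\sigma=\rho^1-\rho^0\in T_\rho\P(\Omega)$, and the reformulation established just before the theorem lets me write the action as $\int_0^1 G_{\H}(\rho(t))(\sigma,\sigma)\,dt=\int_0^1\int_\Omega\int_\Omega(\sigma,L_{\H,\rho(t)}^{-1}\sigma)\,dx\,dy\,dt$. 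Since $\H$ is strictly convex, $[\delta^2\H(\rho)]^{-1}$ is a positive kernel and $L_{\H,\rho}$ is uniformly elliptic; as $t\mapsto\rho(t,\cdot)$ is continuous with compact image in $\P(\Omega)$, the ellipticity bounds on $L_{\H,\rho(t)}$ are uniform in $t$. Comparing $L_{\H,\rho(t)}$ with $-\Delta$ then gives $\int\int(\sigma,L_{\H,\rho(t)}^{-1}\sigma)\le C\,\|\sigma\|_{H^{-1}}^2$ with $C$ independent of $t$, and $\|\sigma\|_{H^{-1}}<\infty$ since $\sigma$ is smooth with zero mean. Integrating in $t$ yields $W_{\H}(\rho^0,\rho^1)<+\infty$.

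For the geodesic equations I would use the constrained variational (Benamou--Brenier / Otto-calculus) formulation. Geodesics are critical points of the energy $\frac{1}{2}\int_0^1\int_\Omega\int_\Omega(v,[\delta^2\H(\rho)]^{-1}v)\,dx\,dy\,dt$ subject to the continuity constraint $\partial_t\rho+\nabla\cdot(\int_\Omega[\delta^2\H(\rho)]^{-1}v\,dy)=0$ with fixed endpoints. I would introduce a Lagrange multiplier $\lambda(t,x)$ for the constraint and take free variations in $v$, $\lambda$ and $\rho$. Varying $\lambda$ returns the continuity equation. Varying $v$, and using the symmetry and positivity of the kernel $[\delta^2\H(\rho)]^{-1}$, forces the optimal velocity to be a gradient, $v=\nabla\Phi$ with $\Phi=-\lambda$; substituting this into the constraint produces the first line of \eqref{geo}. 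Varying $\rho$ is where the second line comes from: the energy contributes $\frac{1}{2}\delta_\rho\!\int\int(\nabla\Phi,[\delta^2\H(\rho)]^{-1}\nabla\Phi)$, the multiplier's time term contributes $-\partial_t\Phi$ after integrating by parts in $t$ (endpoint variations vanish), and the multiplier's transport term contributes $-\delta_\rho\!\int\int(\nabla\Phi,[\delta^2\H(\rho)]^{-1}\nabla\Phi)$; collecting these gives exactly $\partial_t\Phi+\frac{1}{2}\delta_\rho\!\int\int(\nabla\Phi,[\delta^2\H(\rho)]^{-1}\nabla\Phi)=0$. The only delicate bookkeeping is the sign convention $\Phi=-\lambda$ and tracking the factor $\frac{1}{2}$, since the energy contributes half the $\rho$-derivative while the transport term contributes the full derivative.

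To obtain the simplified system for $\H=\H_f=\int_\Omega f(\rho)\,dx$, I would use that the $L^2$ Hessian is diagonal, $[\delta^2\H_f(\rho)]^{-1}(x,y)=f''(\rho)(x)^{-1}\delta_{x=y}$, so the nonlocal kernel collapses to the pointwise weight $f''(\rho)^{-1}$. The first equation immediately becomes $\partial_t\rho+\nabla\cdot(f''(\rho)^{-1}\nabla\Phi)=0$. For the second, I would compute the pointwise functional derivative $\delta_\rho\int_\Omega\|\nabla\Phi\|^2 f''(\rho)^{-1}\,dx=-\|\nabla\Phi\|^2 f'''(\rho)/f''(\rho)^2$, so that $\frac{1}{2}\delta_\rho(\cdots)=-\frac{1}{2}(\nabla\Phi,\nabla\Phi)f'''(\rho)/f''(\rho)^2$, yielding $\partial_t\Phi-\frac{1}{2}(\nabla\Phi,\nabla\Phi)f'''(\rho)/f''(\rho)^2=0$ as stated.

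The main obstacle is the density-variation of the inverse Hessian operator. For general $\H$ the operator $[\delta^2\H(\rho)]^{-1}$ is nonlocal, so making sense of $\delta_\rho\!\int\int(\nabla\Phi,[\delta^2\H(\rho)]^{-1}\nabla\Phi)$ requires differentiating the inverse of a $\rho$-dependent elliptic operator, via $\delta_\rho(A^{-1})=-A^{-1}(\delta_\rho A)A^{-1}$ applied to $A=\delta^2\H(\rho)$. In the local case $\H_f$ this is an elementary pointwise computation, which is why the simplified system is clean; the general identity \eqref{geo} should be read as a formal critical-point equation. A secondary gap, which I would flag rather than resolve here, is the rigorous existence and regularity of a minimizing path, so that these Euler--Lagrange equations are genuinely satisfied by a geodesic; the finiteness argument above only guarantees that the variational problem is nonvacuous.
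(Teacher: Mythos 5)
Your proposal is correct, and it follows the paper's overall strategy: finiteness via the linear-interpolation path, and the geodesic system via a constrained Lagrange-multiplier (Benamou--Brenier type) variation. The differences are in execution, and they are worth recording. For finiteness, the paper also takes $\bar\rho(t)=(1-t)\rho^0+t\rho^1$, but rather than bounding $G_{\H}(\rho(t))(\sigma,\sigma)\le C\|\sigma\|_{H^{-1}}^2$ by operator comparison as you do, it exhibits an explicit feasible flux $\bar m=\nabla\bar\Phi$ with $\bar\Phi=-\Delta^{-1}\partial_t\bar\rho$ and observes that the resulting action is finite; the two arguments are interchangeable and sit at the same level of rigor. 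For the geodesic equation, the paper first changes variables to the flux $m=[\delta^2\H(\rho)]^{-1}v$, so the constraint becomes the $\rho$-independent continuity equation $\partial_t\rho+\nabla\cdot m=0$ and all $\rho$-dependence is isolated in the energy $\tfrac12\int\int(m,\delta^2\H(\rho)m)$; stationarity in $\rho$ then reads $\partial_t\Phi=\tfrac12\delta_\rho\int\int(m,\delta^2\H(\rho)m)$, and the paper must invoke the derivative-of-the-inverse identity to convert this back to the potential variable, which produces the sign flip leading to \eqref{geo}. You instead keep the velocity variable $v$, so that \emph{both} the energy and the constraint carry the kernel $[\delta^2\H(\rho)]^{-1}$; the two $\rho$-variations then combine with coefficient $\tfrac12-1=-\tfrac12$, giving \eqref{geo} directly and bypassing the derivative-of-the-inverse identity altogether. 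Your sign bookkeeping (with $\Phi=-\lambda$ under the convention that the multiplier term enters with a minus sign) is internally consistent, and your pointwise computation $\delta_\rho\big(f''(\rho)^{-1}\big)=-f'''(\rho)/f''(\rho)^2$ reproduces the simplified system exactly as in the paper. The caveats you flag --- the formal nature of differentiating the nonlocal inverse operator, and the absence of an existence/regularity theory for minimizing paths --- are likewise left unaddressed in the paper's proof, so your argument does not lean on anything beyond what the paper itself assumes.
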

\begin{proof}
We first prove that the distance function is well-defined. First, by denoting
$m(t,x)=[\delta^{2}\H(\rho)]^{-1}v(t,x)$, we can rewrite the minimization problem \eqref{HT} as
\begin{equation}\label{miniflux}
  W_{\H}(\rho^0,\rho^1)^2=\inf_{\rho, m}\int_0^1\int_{\Omega}\int_{\Omega}(m(t,x), \delta^2\H(\rho)(x,y)m(t,y))dxdydt
\end{equation}
along with the constraints
\begin{equation*}
  \partial_t\rho(t,x)+\nabla\cdot m(t,x)=0,
\end{equation*}
with fixed initial and terminal densities $\rho^0$, $\rho^1$. We show that there exists a feasible
path for any $\rho^0$, $\rho^1\in \mathcal{P}(\Omega)$. Notice that
$\min\{\min_{x\in\Omega}\rho^0,\min_{x\in\Omega}\rho^1\}>0$. We construct a path
$\bar\rho(t,x)=(1-t)\rho^0+t\rho^1$, where $t\in[0,1]$. Thus $\bar\rho(t,x)\in\mathcal{P}(\Omega)$
and $\min_{{t,x}\in\Omega}\bar\rho(t,x)>0$. Construct a feasible flux function $\bar
m(t,x)=\nabla\bar\Phi(t,x)$, with $\bar\Phi(t,x)=-\Delta^{-1}\partial_t\bar\rho(t,x)\in
C^\infty(\Omega)$. Thus
\begin{equation*}
\int_0^1\int_{\Omega}\int_{\Omega}(\nabla\bar\Phi(t,x), [\delta^2\H(\rho)]^{-1}(x,y)\nabla\bar\Phi(t,y))dxdydt<\infty,
\end{equation*}
Then $(\bar\rho(t,x), \bar m=\nabla\bar\Phi(t,x))$ is a feasible path for minimization problem
\eqref{miniflux} for any $\rho^0$, $\rho^1\in\mathcal{P}(\Omega)$.

We next derive the geodesic equation within $\P(\Omega)$. The first step is to write down the
Lagrangian multiplier $\Phi(t,x)$ of the continuity equation $\partial_t\rho+\nabla\cdot m=0$
\begin{equation*}
  \begin{split}
    \L(m,\rho,\Phi)=&\frac{1}{2}\int_0^1\int_\Omega\int_\Omega (m(t,x), \delta^2\H(\rho)(x,y)m(t,y))dxdydt\\
    &+\int_0^1\int_\Omega \Phi(x)(\partial_t\rho(t,x)+\nabla\cdot m(t,x))dx. 
  \end{split}
\end{equation*}
At $\rho\in\P(\Omega)$, $\delta_\rho\L=0$, $\delta_m\L=0$, and $\delta_\Phi\L=0$,
we know that the minimizer satisfies
\begin{equation*}
  \left\{
  \begin{aligned}
    \int_\Omega \delta^2\H(\rho)(x,y)m(t,y)dy=\nabla\Phi(t,x),\\
    \frac{1}{2}\delta_{\rho}\int_\Omega\int_\Omega (m(t,x), \delta^2\H(\rho)(x,y)m(t,y))dxdy=\partial_t\Phi(t,x),\\
    \partial_t\rho(t,x)+\nabla\cdot m(t,x)=0.
  \end{aligned}\right.
\end{equation*}
Finally, by denoting $m(t,x)=\int_\Omega\delta^2\H(\rho)^{-1}(x,y)\nabla\Phi(t,y)dy$ and using the
fact
\begin{equation*}
  \delta_\rho [\delta^{2}\H(\rho)]=-[\delta^{2}\H(\rho)]^{-1}\delta_\rho [\delta^{2}\H(\rho)^{-1}]   [\delta^{2}\H(\rho)]^{-1},
\end{equation*}
we derive the geodesic equation \eqref{geo}. 
\end{proof}
\begin{remark}
  Consider the special case that $\H(\rho)$ is the $f$-entropy with $f''(\cdot)$ homogeneous of
  degree $-\gamma$. The objective function
  \[
  \int_0^1 \int_\Omega\int_\Omega (m(t,x), \delta^2\H(\rho)(x,y)m(t,y))dxdydt
  =\int_0^1 \int_\Omega \frac{\|m(t,x)\|^2}{\rho(t,x)^\gamma} dx dt
  \]
  is convex jointly in $(m,\rho)$ if and only if $\gamma\in [0,1]$. As two special cases, the
  proposed minimal flux minimization \eqref{miniflux} for the optimal KL ($\gamma=1$) and the
  Pearson ($\gamma=0$) Hessian transport are convex.
\end{remark}

\subsection{Examples}
Here we list the geodesic equation for the $f$-divergence functions.
\begin{example}[KL divergence HT-geodesic]
  \[
  f(\rho) = \rho\log \rho,\quad f'(\rho) = \log \rho + 1,\quad f''(\rho) = \frac{1}{\rho}, \quad f'''(\rho)=-\frac{1}{\rho^2}.
  \]
  Thus $f'''(\rho)/f''(\rho)^2=-1$ and the geodesic equation is
\begin{equation*}
  \left\{\begin{aligned}
  &\partial_t\rho+\nabla\cdot(\rho\nabla\Phi)=0\\
  &\partial_t\Phi+\frac{1}{2}(\nabla\Phi, \nabla\Phi)=0.
  \end{aligned}
  \right.
\end{equation*}
This is the classical geodesic equation in Wasserstein geometry, including both the continuity
equation and the Hamilton-Jacobi equation.
\end{example}
\begin{example}[Reverse KL divergence HT-geodesic]
  \[
  f(\rho) = -\log \rho, \quad f'(\rho)=-\frac{1}{\rho}, \quad f''(\rho)=\frac{1}{\rho^2}, \quad f'''(\rho)=-\frac{2}{\rho^3}.
  \]
  Thus $f'''(\rho)/f''(\rho)^2=-2\rho$, then the geodesic equation is 
  \begin{equation*}
    \left\{\begin{aligned}
    &\partial_t\rho+\nabla\cdot(\rho^2\nabla\Phi)=0\\
    &\partial_t\Phi+(\nabla\Phi, \nabla\Phi)\rho=0.
    \end{aligned}
    \right.
  \end{equation*}    
\end{example}

\begin{example}[$\alpha$-divergence HT-geodesic]
  \[
  f(\rho) = \frac{4}{1-\alpha^2} (1-\rho^{\frac{1+\alpha}{2}}), \quad f'(\rho)=\frac{2}{\alpha-1} \rho^{\frac{\alpha-1}{2}}, 
  \quad f''(\rho)=\rho^{\frac{\alpha-3}{2}},\quad f'''(\rho)=\frac{\alpha-3}{2}\rho^{\frac{\alpha-5}{2}}.
  \]
  Thus $f'''(\rho)/f''(\rho)^2=\frac{\alpha-3}{2}\rho^{\frac{1-\alpha}{2}}$, then the geodesic equation is
  \begin{equation*}
    \left\{\begin{aligned}
    &\partial_t\rho+\nabla\cdot(\rho^{\frac{3-\alpha}{2}}\nabla\Phi)=0\\
    &\partial_t\Phi+\frac{3-\alpha}{2}(\nabla\Phi, \nabla\Phi)\rho^{\frac{1-\alpha}{2}}=0.
    \end{aligned}
    \right.
  \end{equation*}    
\end{example}

\begin{example}[Hellinger distance HT-geodesic]
  \[
  f(\rho) = (\sqrt{\rho}-1)^2,
  \]
  and it is a special case of $\alpha$-divergence with $\alpha=0$. Hence the geodesic equation takes the form
  \begin{equation*}
    \left\{\begin{aligned}
    &\partial_t\rho+\nabla\cdot(\rho^{\frac{3}{2}}\nabla\Phi)=0\\
    &\partial_t\Phi+\frac{3}{2}(\nabla\Phi, \nabla\Phi)\rho^{\frac{1}{2}}=0.
    \end{aligned}
    \right.
  \end{equation*}  
\end{example}

\begin{example}[Pearson divergence HT-geodesic]
  \[
  f(\rho) = (\rho-1)^2,
  \]
  and it is a special case of $\alpha$-divergence with $\alpha=3$. 
  Hence the geodesic equation is
  \begin{equation*}
    \left\{\begin{aligned}
    &\partial_t\rho+\Delta\Phi=0\\
    &\partial_t\Phi=0.
    \end{aligned}
    \right.
  \end{equation*}   
  This geodesic equation satisfies $\frac{\partial^2}{\partial t^2}\rho(t,x)=0$, which implies that
  \[
  \rho(t,x)=(1-t)\rho^0(x)+t\rho^1(x).
  \]
  It states that the geodesic equation in optimal Hellinger distance transport metric is a straight
  line in the probability space.
\end{example}

\begin{example}[Jensen-Shannon divergence HT-geodesic]
  \[
  f(\rho) = -(\rho+1)\log\frac{1+\rho}{2} + \rho\log \rho,\quad
  f''(\rho) = \frac{1}{\rho(1+\rho)},\quad f'''(\rho)=-\frac{2\rho+1}{(\rho+1)^2\rho^2} 
  \]
  Thus ${f'''(\rho)}/{f''(\rho)^2}=-(2\rho+1)$. Hence the geodesic equation is 
  \begin{equation*}
    \left\{    \begin{aligned}
      &\partial_t\rho+\nabla\cdot(\rho(1+\rho)\nabla\Phi)=0\\
      &\partial_t\Phi+\frac{1}{2}(\nabla\Phi,\nabla\Phi)(2\rho+1)=0.        
    \end{aligned}\right.
  \end{equation*}
\end{example}

\section{Numerical examples}\label{section4}
In this section, we demonstrate the properties of the newly derived equations with several
examples. Since the gradient flow equations are linear, the flow dynamics are governed mostly by
the spectrum of the Kolmogorov forward and backward operators. Here, we consider the simple setting
of $\Omega$ equal to the unit interval $[0,1]$ with periodic boundary condition. The PDEs are
numerically discretized with a finite element method with a uniform discretization.
  
\begin{figure}[h!]
  \centering
  \begin{tabular}{ccc}
    \includegraphics[scale=0.1]{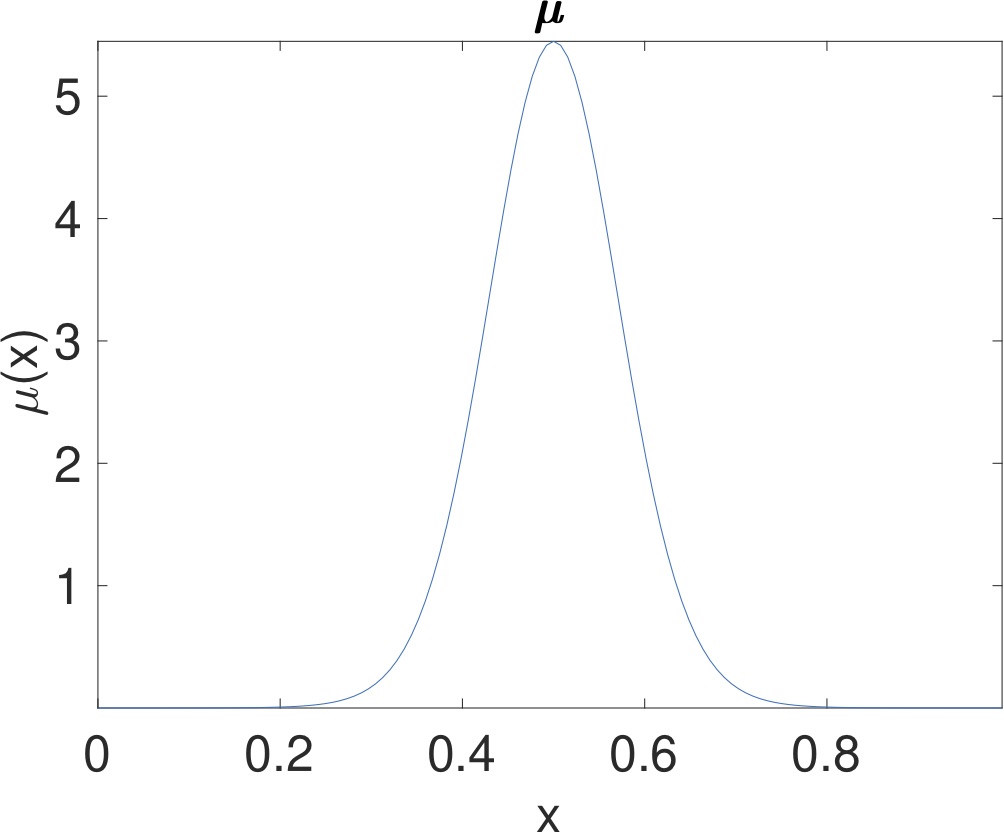}&
    \includegraphics[scale=0.1]{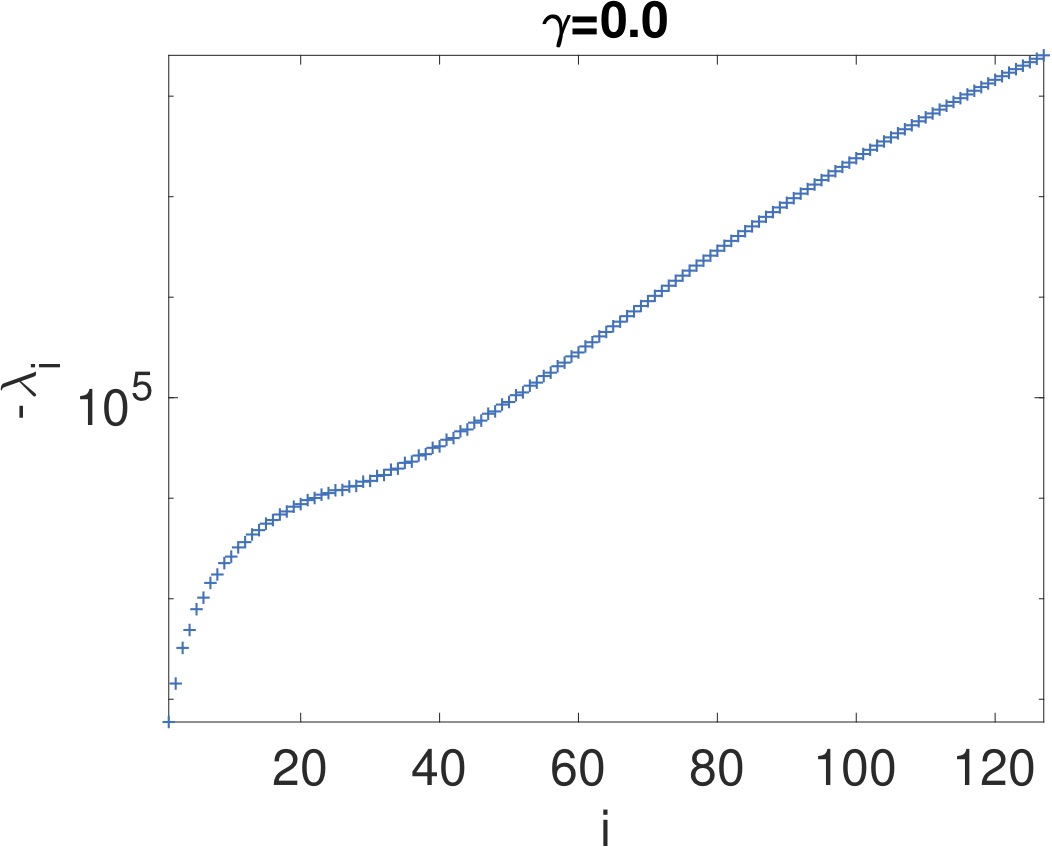}&
    \includegraphics[scale=0.1]{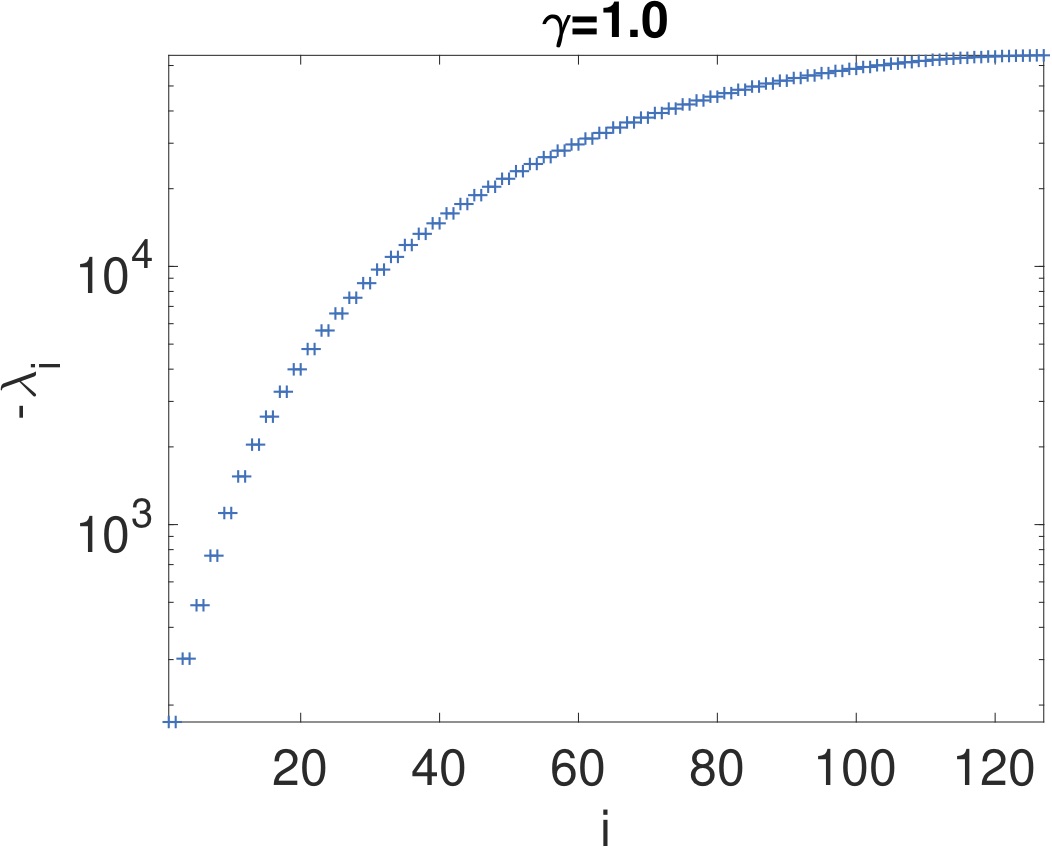}\\
    (a)&(b)&(c)\\
    \includegraphics[scale=0.1]{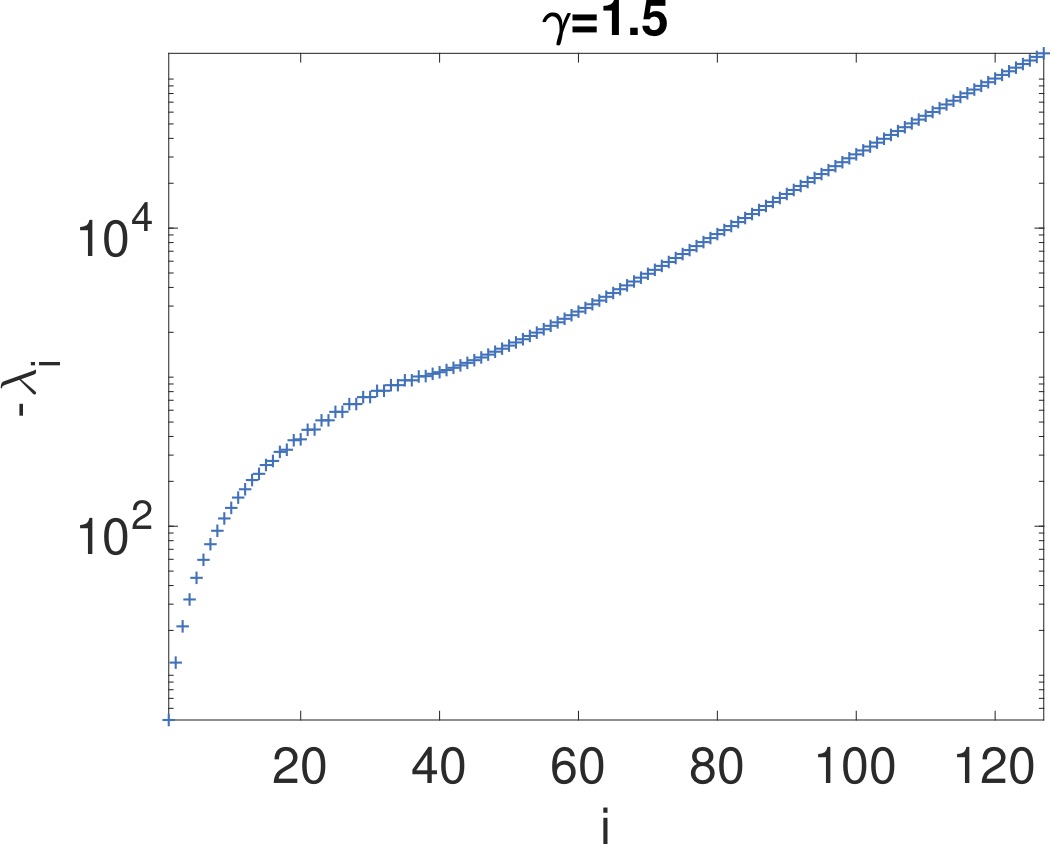}&
    \includegraphics[scale=0.1]{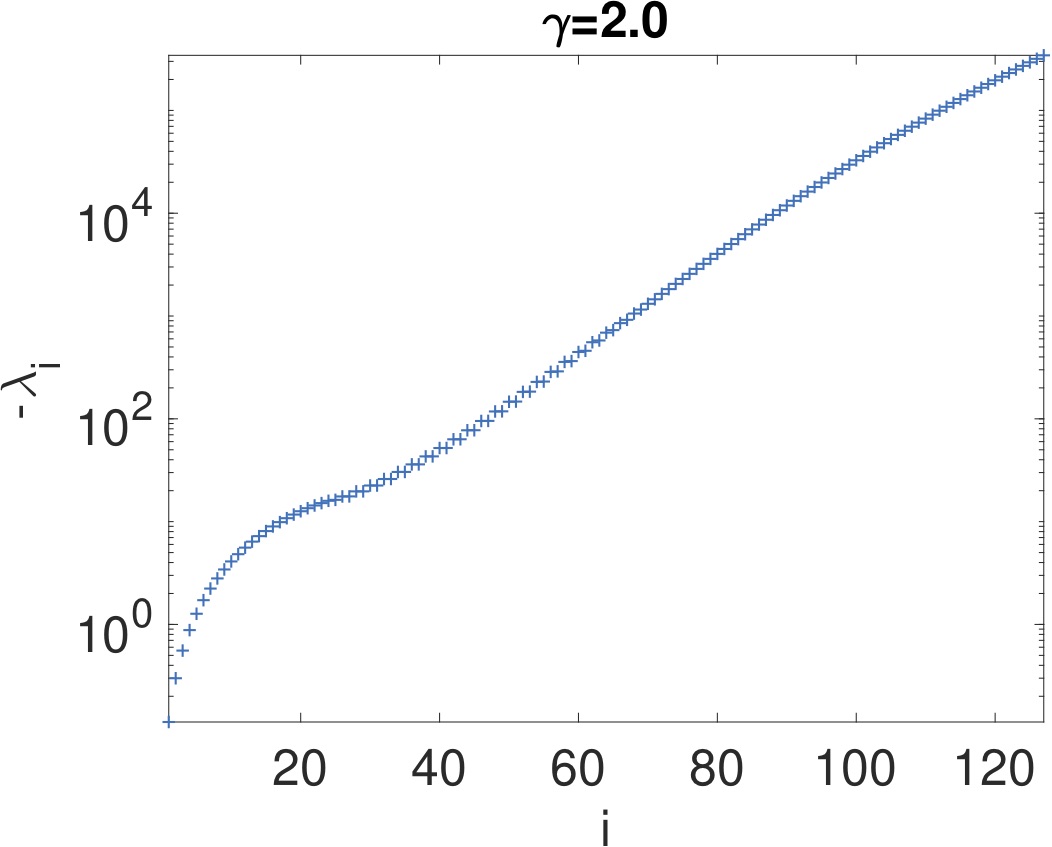}&
    \includegraphics[scale=0.1]{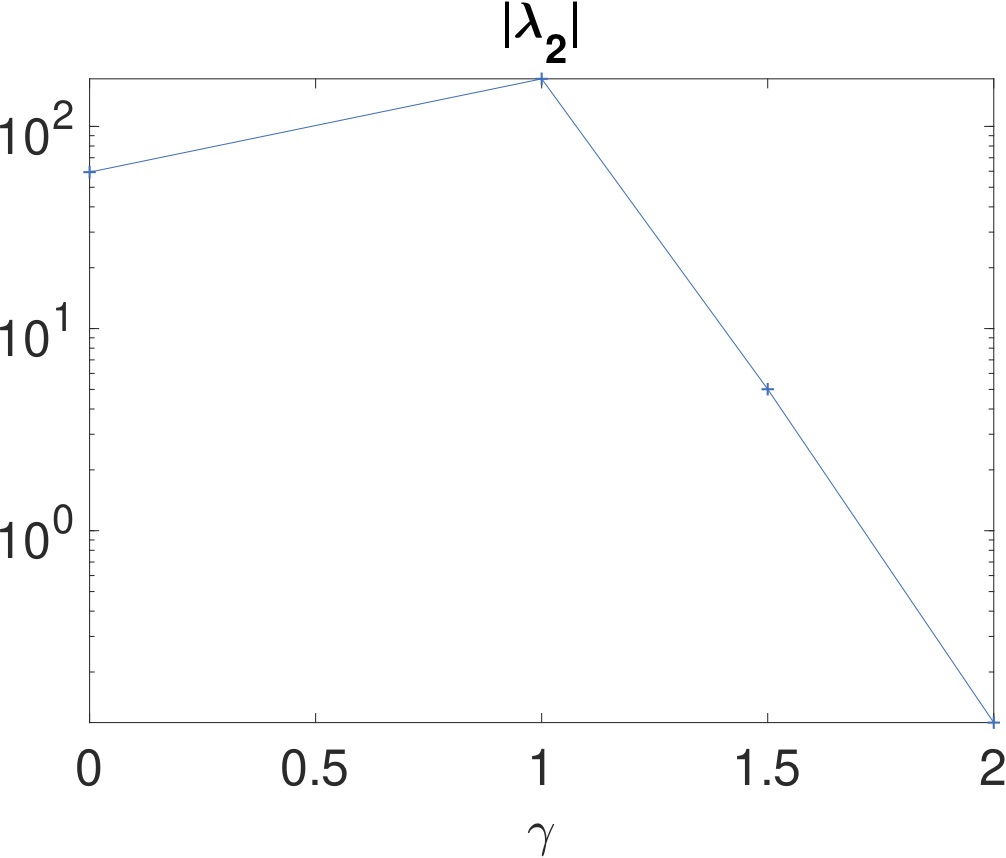}\\
    (d)&(e)&(f)
  \end{tabular}
  \caption{Unimodal case. (a) Reference measure $\mu(x)$. (b)-(e): The spectrum (part close to the
    zero) in the log-scale for the gradient flow PDEs with $\gamma=0,1,1.5,2$. (f): The smallest
    non-zero eigenvalue $\lambda_2$ for the gradient flow PDEs with different choices of $\gamma$.}
  \label{fig:nr_unimodal}
\end{figure}
We consider two simple examples in this setting. In the first example, the reference measure
$\mu(x)$ is a unimodal distribution (shown in Figure \ref{fig:nr_unimodal}(a)). Figure
\ref{fig:nr_unimodal} (b)-(e) plot the bottom part of the spectrum of the gradient flow PDEs for
$\gamma=0,1,1.5,2$. These $\gamma$ values correspond to the Pearson, KL, Hellinger, and reverse KL
divergence. We also summarize the magnitude of the smallest non-zero eigenvalue $\lambda_2$ for
these choices of $\gamma$ in Figure \ref{fig:nr_unimodal}(f). For these linear gradient flow PDEs,
$|\lambda_2|$ controls the convergence rate to the reference measure $\mu(x)$ for a generic initial
condition $\rho(t=0)$. The plot suggests that among various choices of $\gamma$, the standard
Fokker-Planck equation ($\gamma=1$) has the largest $|\lambda_2|$ and hence the fastest convergence
rate.

\begin{figure}[h!]
  \centering
  \begin{tabular}{ccc}
    \includegraphics[scale=0.1]{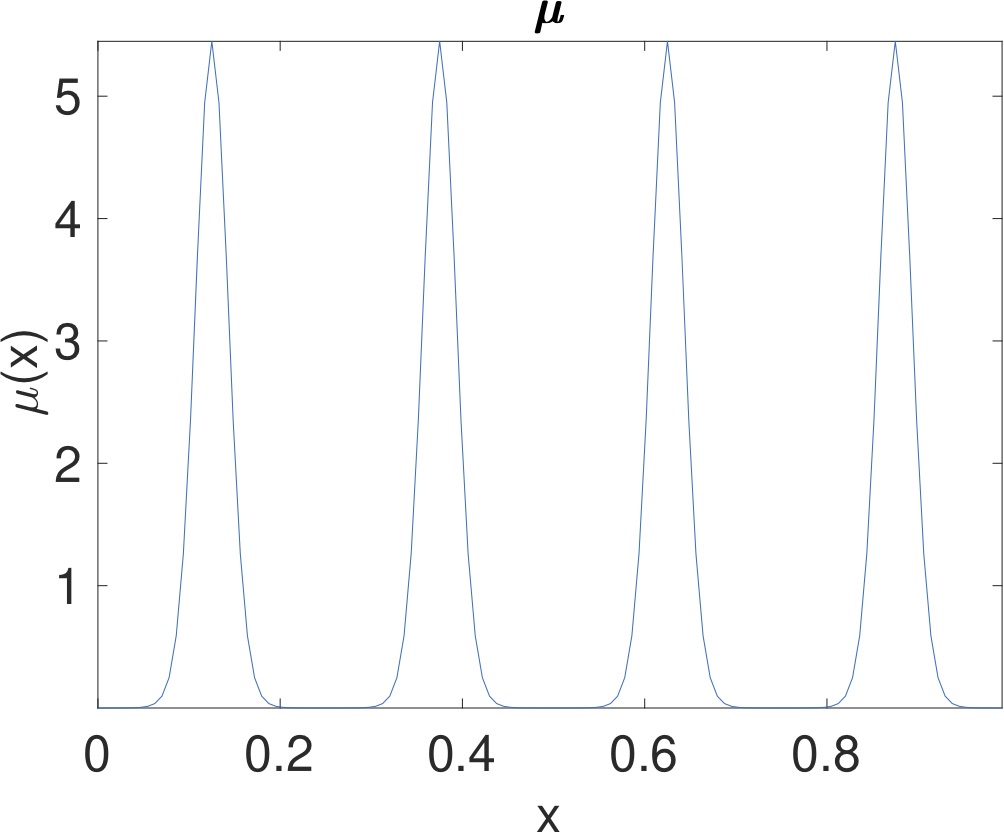}&
    \includegraphics[scale=0.1]{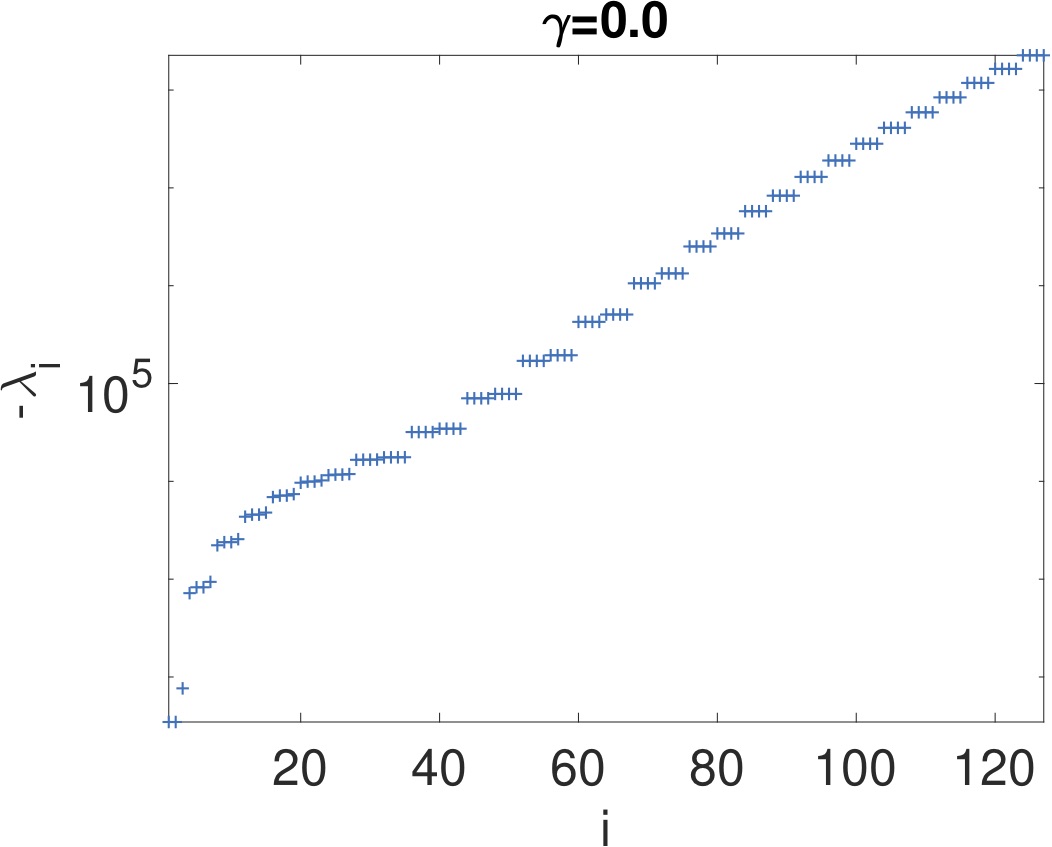}&
    \includegraphics[scale=0.1]{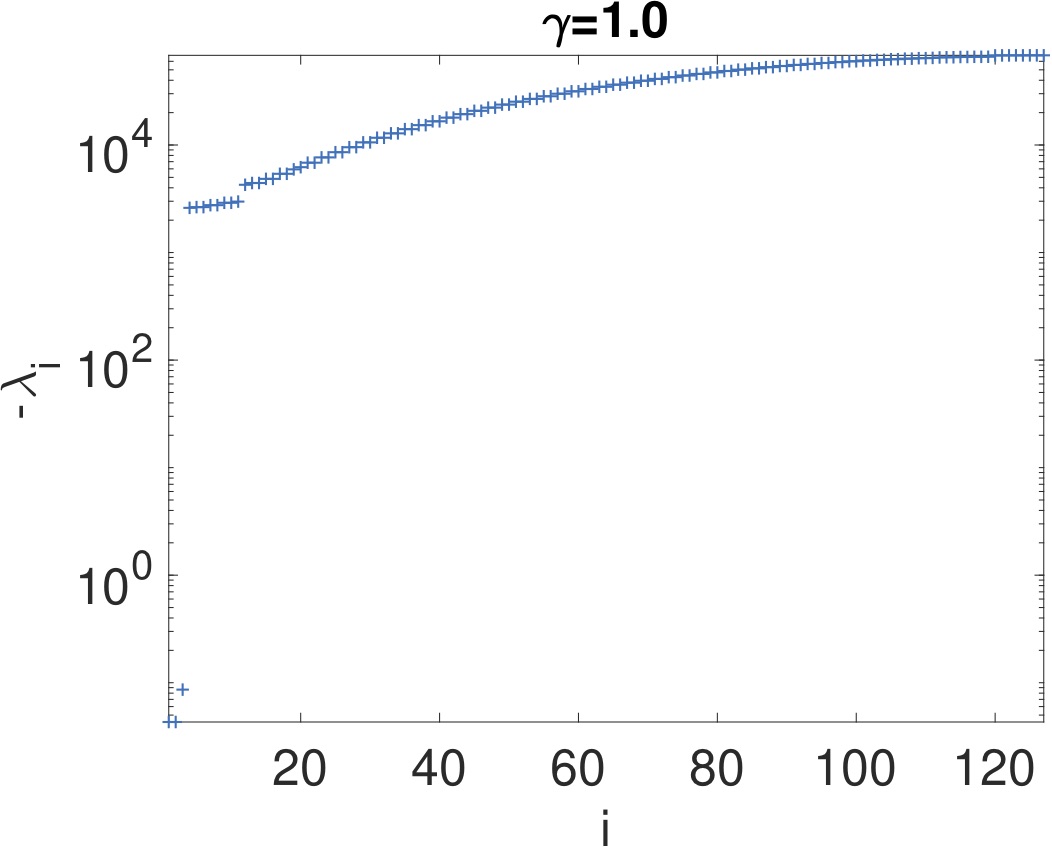}\\
    (a)&(b)&(c)\\
    \includegraphics[scale=0.1]{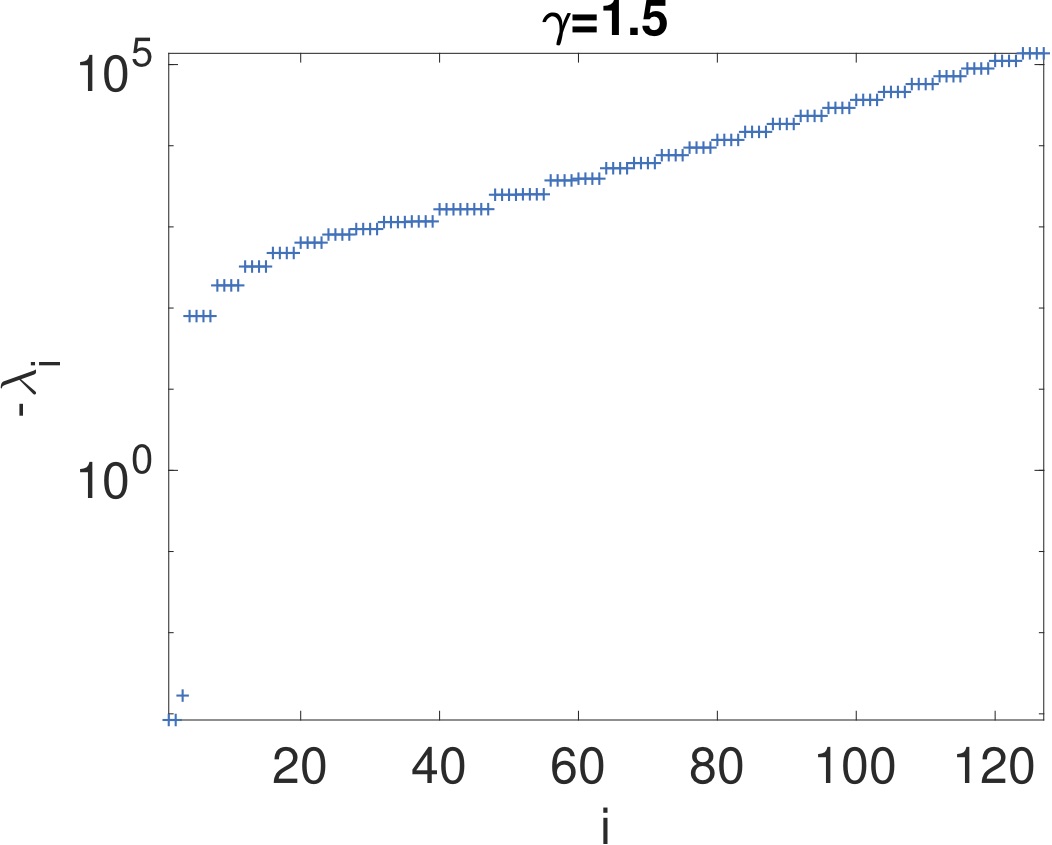}&
    \includegraphics[scale=0.1]{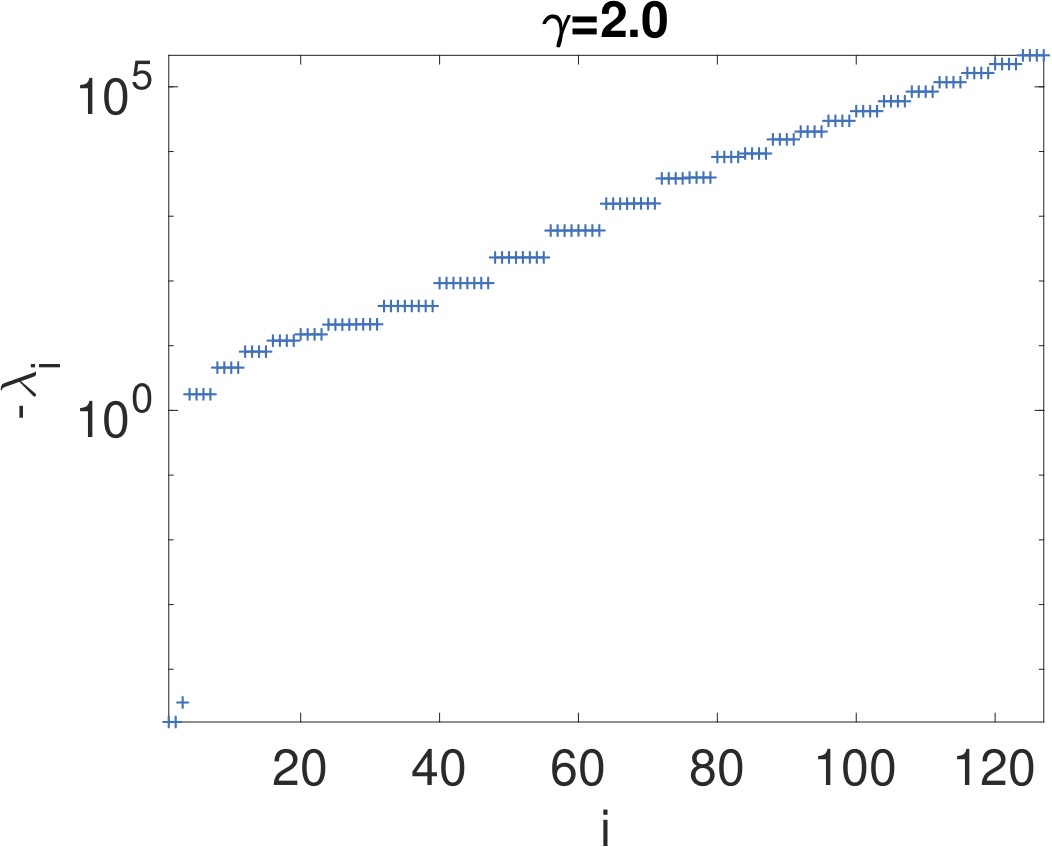}&
    \includegraphics[scale=0.1]{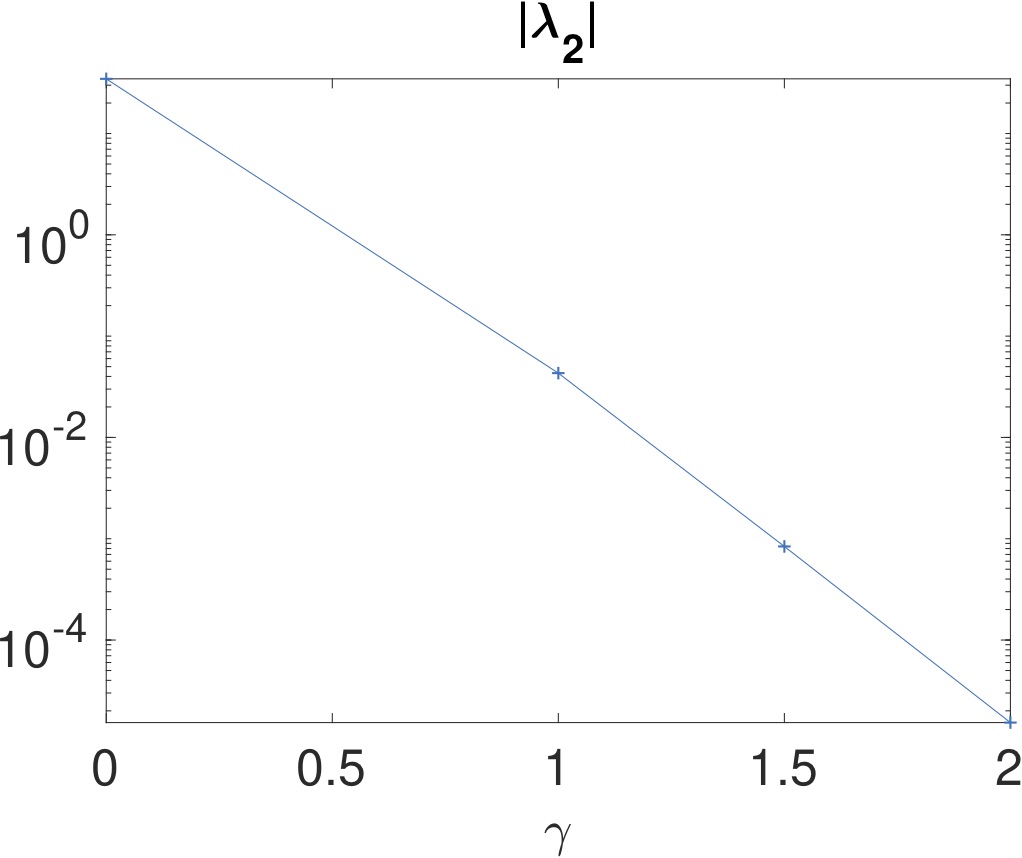}\\
    (d)&(e)&(f)
  \end{tabular}
  \caption{Multimodal case. (a) Reference measure $\mu(x)$. (b)-(e): The spectrum (part close to the
    zero) in the log-scale for the gradient flow PDEs with $\gamma=0,1,1.5,2$. (f): The smallest
    non-zero eigenvalue $\lambda_2$ for the gradient flow PDEs with different choices of $\gamma$.}
  \label{fig:nr_multimodal}
\end{figure}

In the second example, the reference measure $\mu(x)$ is a multimodal distribution (shown in Figure
\ref{fig:nr_multimodal}(a)). Figure \ref{fig:nr_multimodal} (b)-(e) plot the bottom part of the
spectrum of the gradient flow PDEs for $\gamma=0,1,1.5,2$. We again summarize the magnitude of the
smallest non-zero eigenvalue $\lambda_2$ for these choices of $\gamma$ in Figure
\ref{fig:nr_multimodal}(f). It is a well-known fact that, for the multimodal distribution, there
exists a gap between the first few lowest eigenvalues (the number of which is equal to the number of
modes) and the rest of the spectrum, due to the metastable states. For the standard Fokker-Planck
equation ($\gamma=1$), this gap is shown clearly in (Figure \ref{fig:nr_multimodal}(d)). From the
plots in Figure \ref{fig:nr_multimodal}, one can make two observations concerning the gradient flow
PDEs introduce in Section \ref{section2}. The first is that, although the gap seems to persist for
$\gamma$ greater than $1$, it decreases when $\gamma$ increases from $1$. For example in Figure
\ref{fig:nr_multimodal} the gap is significantly smaller at $\gamma=0$. The second observation is
that, in contrast to the unimodal case, $|\lambda_2|$ for the multimodal case is no longer obtained
at $\gamma=1$. In fact $|\lambda_2|$ increases quite rapidly as $\gamma$ decreases from $1$, thus
implying that the gradient flow PDE of the Pearson ($\gamma=0$) divergence converges at a faster
rate compared to the one of the standard Fokker-Planck equation ($\gamma=1$).

\section{Discussions}
In this paper, we propose a family of Riemannian metrics in the probability space, named Hessian transport metric. We demonstrate that the heat flow is the gradient flow of several energy functions under the HT-metrics. Following this, we further introduce the gradient flows of divergence functions in the HT-metrics, which can be interpreted as Kolmogorov forward equations of the associated HT-SDEs.

Our study is the first step to bridge Hessian geometry, Wasserstein geometry, and divergence functions. Several fundamental questions arise. Firstly, there are many entropies and divergence functions in information theory \cite{IG}.  Besides the $\alpha$ divergences and $\alpha$ entropy, which type of entropy's HT gradient flows of divergence functions are probability transition equations of HT-SDEs? Secondly, in machine learning applications, especially the parametric statistics, our new geometry structure leads to a new class of metrics in parameter spaces/statistical manifold. We expect some of these metrics will help the training process \cite{NP,lin2019wasserstein}.


\begin{thebibliography}{10}

\bibitem{IG}
S.~Amari.
\newblock {\em Information Geometry and Its Applications}.
\newblock Springer Publishing Company, Incorporated, 1st edition, 2016.

\bibitem{divergence}
S.~Amari and A.~Cichocki.
\newblock Information geometry of divergence functions.
\newblock {\em Bulletin of the Polish Academy of Sciences: Technical Sciences},
  58(1):183--195, 2010.

\bibitem{LP}
S.~Amari, R.~Karakida, and M.~Oizumi.
\newblock Information {{Geometry Connecting Wasserstein Distance}} and
  {{Kullback}}-{{Leibler Divergence}} via the {{Entropy}}-{{Relaxed
  Transportation Problem}}.
\newblock {\em arXiv:1709.10219 [cs, math]}, 2017.

\bibitem{CFPE}
A.~Arnold, P.~Markowich, G.~Toscani, and A.~Unterreiter.
\newblock On convex {S}obolev inequalities and the rate of convergence to
  equilibrium for {F}okker-{P}lanck type equations.
\newblock {\em Communications in Partial Differential Equations},
  26(1-2):43--100, 2001.

\bibitem{IG2}
N.~Ay, J.~Jost, H.~V. L\^{e}, and L.~Schwachh\"{o}fer.
\newblock {\em Information geometry}, volume~64.
\newblock Springer, Cham, 2017.

\bibitem{BauerJoshiModin2015_diffeomorphic}
M.~Bauer, S.~Joshi, and K.~Modin.
\newblock Diffeomorphic {{Density Matching}} by {{Optimal Information
  Transport}}.
\newblock {\em SIAM Journal on Imaging Sciences}, 8(3):1718--1751, 2015.

\bibitem{BauerModin2018_semiinvariant}
M.~Bauer and K.~Modin.
\newblock Semi-invariant {{Riemannian}} metrics in hydrodynamics.
\newblock {\em arXiv:1810.03424 [math]}, 2018.

\bibitem{CaoLuLu2018_exponential}
Y.~Cao, J.~Lu, and Y.~Lu.
\newblock Exponential decay of {R}enyi divergence under {{Fokker}}-{{Planck}}
  equations.
\newblock {\em arXiv:1805.06554 [math]}, 2018.

\bibitem{WF2}
L.~Chizat, G.~Peyr\'e, B.~Schmitzer, and F.-X. Vialard.
\newblock An {{Interpolating Distance Between Optimal Transport}} and
  {{Fisher}}\textendash{{Rao Metrics}}.
\newblock {\em Foundations of Computational Mathematics}, 18(1):1--44, 2018.

\bibitem{ChowLiZhou2018_entropy}
S.-N. Chow, W.~Li, and H.~Zhou.
\newblock Entropy dissipation of {{Fokker}}-{{Planck}} equations on graphs.
\newblock {\em Discrete \& Continuous Dynamical Systems, series A}, 2018.

\bibitem{CoverThomas1991_elements}
T.~M. Cover and J.~A. Thomas.
\newblock {\em Elements of Information Theory}.
\newblock Wiley Series in Telecommunications. {Wiley}, New York, 1991.

\bibitem{CsiszarShields2004_information}
I.~Csisz\'ar and P.~C. Shields.
\newblock Information {{Theory}} and {{Statistics}}: {{A Tutorial}}.
\newblock {\em Foundations and Trends\texttrademark{} in Communications and
  Information Theory}, 1(4):417--528, 2004.

\bibitem{JKO}
R.~Jordan, D.~Kinderlehrer, and F.~Otto.
\newblock The variational formulation of the {F}okker-{P}lanck equation.
\newblock {\em SIAM J. Math. Anal.}, 29(1):1--17, 1998.

\bibitem{Lafferty}
J.~D. Lafferty.
\newblock The density manifold and configuration space quantization.
\newblock {\em Transactions of the American Mathematical Society},
  305(2):699--741, 1988.

\bibitem{Li2018_geometrya}
W.~Li.
\newblock Geometry of probability simplex via optimal transport.
\newblock {\em arXiv:1803.06360 [math]}, 2018.

\bibitem{NP}
W.~Li, A.~T. Lin, and G.~Montufar.
\newblock Affine natural proximal learning.
\newblock {\em Geometric science of information, 2019}, 2019.

\bibitem{LiM}
W.~Li and G.~Mont{\'u}far.
\newblock Natural gradient via optimal transport.
\newblock {\em Information Geometry}, 1(2):181--214, Dec 2018.

\bibitem{LiMontufar2018_ricci}
W.~Li and G.~Montufar.
\newblock Ricci curvature for parametric statistics via optimal transport.
\newblock {\em CAM report 18-52}, 2018.

\bibitem{Mielke}
M.~Liero, A.~Mielke, and G.~Savar\'{e}.
\newblock Optimal entropy-transport problems and a new
  {H}ellinger-{K}antorovich distance between positive measures.
\newblock {\em Invent. Math.}, 211(3):969--1117, 2018.

\bibitem{lin2019wasserstein}
A.~T. Lin, W.~Li, S.~Osher, and G.~Montufar.
\newblock Wasserstein proximal of {GAN}s.
\newblock {\em CAM report 18-53}, 2019.

\bibitem{Liu2017_steina}
Q.~Liu.
\newblock Stein variational gradient descent as gradient flow.
\newblock In I.~Guyon, U.~V. Luxburg, S.~Bengio, H.~Wallach, R.~Fergus,
  S.~Vishwanathan, and R.~Garnett, editors, {\em Advances in Neural Information
  Processing Systems 30}, pages 3115--3123. Curran Associates, Inc., 2017.

\bibitem{LiuWang2016_steina}
Q.~Liu and D.~Wang.
\newblock Stein variational gradient descent: A general purpose bayesian
  inference algorithm.
\newblock In {\em Proceedings of the 30th International Conference on Neural
  Information Processing Systems}, NIPS'16, pages 2378--2386. Curran Associates
  Inc., USA, 2016.

\bibitem{LuLuNolen2019}
J.~Lu, Y.~Lu, and J.~Nolen.
\newblock Scaling limit of the stein variational gradient descent: The mean
  field regime.
\newblock {\em SIAM Journal on Mathematical Analysis}, 51(2):648--671, 2019.

\bibitem{MMP}
L.~Malag\`o, L.~Montrucchio, and G.~Pistone.
\newblock Wasserstein {R}iemannian geometry of positive definite matrices.
\newblock {\em arXiv:1801.09269 [math, stat]}, 2018.

\bibitem{Gaussian}
H.~Q. Minh.
\newblock A unified formulation for the {B}ures-{W}asserstein and
  {L}og-{E}uclidean/{L}og-{H}ilbert-{S}chmidt distances between positive
  definite operators.
\newblock {\em Geometry science of Information}, 2019.

\bibitem{Nelson2}
E.~Nelson.
\newblock {\em Quantum {{Fluctuations}}}.
\newblock Princeton Series in Physics. {Princeton University Press}, Princeton,
  N.J, 1985.

\bibitem{Oksendal2013_stochastic}
B.~K. Oksendal.
\newblock {\em Stochastic Differential Equations: An Introduction with
  Applications}.
\newblock Universitext. {Springer}, Berlin Heidelberg New York Dordrecht
  London, sixth edition, sixth corrected printing edition, 2013.

\bibitem{otto2001}
F.~Otto.
\newblock The geometry of dissipative evolution equations the porous medium
  equation.
\newblock {\em Communications in Partial Differential Equations},
  26(1-2):101--174, 2001.

\bibitem{Pavliotis2014}
G.~A. Pavliotis.
\newblock {\em Stochastic processes and applications}, volume~60 of {\em Texts
  in Applied Mathematics}.
\newblock Springer, New York, 2014.
\newblock Diffusion processes, the Fokker-Planck and Langevin equations.

\bibitem{S}
D.~Shlyakhtenko.
\newblock Free {{Fisher Information}} for {{Non}}-{{Tracial States}}.
\newblock {\em arXiv:math/0101137}, 2001.

\bibitem{Tsallis1988}
C.~Tsallis.
\newblock Possible generalization of boltzmann-gibbs statistics.
\newblock {\em Journal of Statistical Physics}, 52(1):479--487, Jul 1988.

\bibitem{Villani2009_optimal}
C.~Villani.
\newblock {\em Optimal Transport: Old and New}.
\newblock Number 338 in Grundlehren Der Mathematischen {{Wissenschaften}}.
  {Springer}, Berlin, 2009.

\bibitem{Wong}
T.-K.~L. Wong.
\newblock Logarithmic divergences from optimal transport and {R}enyi geometry.
\newblock {\em arXiv:1712.03610 [cs, math, stat]}, 2017.

\bibitem{ZozorBrossier2015_debruijn}
S.~Zozor and J.-M. Brossier.
\newblock de{B}ruijn identities: From {S}hannon, {K}ullback-{L}eibler and
  {F}isher to generalized $\phi$-entropies, $\phi$-divergences and
  $\phi$-{F}isher informations.
\newblock {\em AIP Conference Proceedings}, 1641(1):522--529, 2015.

\end{thebibliography}
\end{document}